\newtheorem{theorem}{Theorem}
\newtheorem{corollary}{Corollary}
\newtheorem{lemma}{Lemma}
\theoremstyle{remark}
\newtheorem{remark}{Remark}
\DeclarePairedDelimiterX\MeijerM[3]{\lparen}{\rparen}%
{\begin{smallmatrix}#1 \\ #2\end{smallmatrix}\delimsize\vert\,#3}
\newcommand\MeijerG[8][]{%
  G^{\,#2,#3}_{#4,#5}\MeijerM[#1]{#6}{#7}{#8}}
\newcommand\MeijerG*[7]{%
  G^{\,#1,#2}_{#3,#4}\MeijerM*{#5}{#6}{#7}}
\definecolor{gold}{rgb}{0.85,.66,0}
\definecolor{cian}{rgb}{.02,.7,.95}
\definecolor{ppp}{rgb}{.7,.3,.82}
\begin{document}
\title{RIS-aided Cooperative FD-SWIPT-NOMA Outage Performance in Nakagami-$m$ Channels}
\author{{Wilson de Souza Junior} and {Taufik Abrão}

\thanks{This work was supported in part by the CAPES (Financial Code 001) and National Council for Scientific and Technological Development (CNPq) of Brazil under Grant 310681/2019-7.}
\thanks{W. Junior and T. Abrão are with the Department of  Electrical Engineering (DEEL). State University of Londrina  (UEL). Po.Box 10.011, CEP:86057-970, Londrina, PR, Brazil.  Email: wilsoonjr98@gmail.com; taufik@uel.br}
}
\maketitle

\begin{abstract}
In this work, we investigate reconfigurable intelligent surfaces (RIS)-assisted cooperative non-orthogonal multiple access (C-NOMA) consisting of two paired users, where the phases of RIS are configured to boost the cell-center device. The cell-center device is designated to act as a full-duplex (FD) relay to assist the cell-edge device. The cell-center device does not use its battery energy to cooperate but harvests energy using simultaneous wireless information power transfer (SWIPT). 
A more practical non-linear energy harvesting model is considered. Expressions for outage probability (OP) and ergodic rate (ER) are devised, assuming that all users' links undergo Nakagami-$m$ channel fading. We first approximate the harvested power as Gamma random variables via the moments matching technique. This allows us to derive analytical OP/ER expressions that are simple to compute yet accurate for a wide range of RIS passive elements configurations, energy harvesting (EH) coefficients, and residual self-interference (SI) levels, being extensively validated by numerical simulations. The OP expressions reveal how paramount is to mitigate the SI in the FD relay mode since for reasonable values of residual SI coefficient ($\omega \geq -20$dB), it is notable its detrimental effect on the system performance. Also, numerical results reveal that by increasing the number of RIS elements can be much more beneficial to the cooperative system than the non-cooperative system.
\end{abstract}

\begin{IEEEkeywords}
RIS, NOMA, Outage Probability, Ergodic Rate, Cooperative, SWIPT, Nakagami-$m$, Self-Interference, Energy Harvest.
\end{IEEEkeywords}

\section{Introduction}
Among several highly important services addressed by the fifth generation (5G) communication systems, and also predicted to be present in wireless communications systems beyond 5G, massive machine-type communication (mMTC) is a use case scenario widely targeted due to the exponential increasing of devices interconnected inside a network. Since wireless networks have become denser, one of the challenges is to support heavy traffic. In recent studies in the literature, it has been proven that {\it non-orthogonal multiple access} (NOMA) can be superior to the multiple access schemes as deployed in the past generations of cellular networks, including frequency division multiple access (FDMA), time division multiple access (TDMA), and code division multiple access (CDMA), which possibly may not be able to scale to meet such new 5G use case demands. NOMA can be regarded as an interesting candidate to overcome such challenges, once NOMA technology allows multiple users to transmit in the same resource block (RB), which can lead to higher spectral efficiency (SE) and energy efficiency (EE), than the usual {\it orthogonal multiple access} (OMA). By adopting NOMA, successive interference cancellation (SIC) is paramount to distinguish the respective user's signal, being essential for NOMA to work suitably.

In a cellular system, the connectivity of the cell-edge devices is often affected due to their geographical position; hence, to improve fairness, the cell-edge devices need to be allocated a large number of resources, which can harm the QoS of the cell-center users. Cooperative communications can be a potential solution for this issue once it can improve the cell-edge users' data rate and enhance the network's fairness.  The integration of NOMA and cooperative user relaying has attracted significant attention mainly due to the natural matching of both techniques since the cell-edge user information can be known to the cell-center users \cite{zeng2020cooperative}. Such a scenario is critical since device-to-device (D2D) communication is one of the key technologies for the next generations of communication systems \cite{jiang}, \cite{mahmood}. Cooperative communications can be performed under two distinct modes: half-duplex (HD) and full-duplex (FD). The HD mode is known for sub-dividing the transmission time block, which can degrade the SE, while the FD mode can be performed simultaneously at the cost of \textit{self-interference} (SI) aggregation. Notably, the cooperative NOMA (C-NOMA) system was first proposed and studied in \cite{7117391}, where the cell-center users are selected as relays to guarantee the QoS of the cell-edge users; thus, the OP and the diversity order are performance metrics of interest. In addition, \cite{8026173,zhang2016full} investigated the performance of FD/HD cooperative NOMA systems.

Although performance gains could be achieved with cooperative communications, the deployment of such technology can be detrimental to the battery lifetime of the devices since inevitably, it drains the battery energy of the cell-center users. In this sense,  {\it simultaneous wireless information and power transfer} (SWIPT) is a promising and sustainable technology that can be useful to address this issue while enabling the implementation of the internet of things (IoT), since IoT devices are usually energy-limited for relay cooperation. Moreover, the SWIPT technique allows the devices to harvest energy from the ambient radio-frequency (RF) sources and reuse it to cooperate with the cell-edge users. SWIPT also can be implemented in two ways, {\it power-splitting} (PS) and {\it time-switching} (TS) \cite{ashraf}. The PS-SWIPT protocol splits the received signal power from the base station (BS) into two parts, one for {\it information decoding} (ID), and the other for {\it energy harvesting} (EH) purposes. On the other hand, the TS-SWIPT protocol necessarily splits  the time slots to perform the ID and EH processes separately. \cite{liu2022outage,liu2022system,tran2021swipt,trin} investigate the system performance of the SWIPT-assisted C-NOMA system over FD/HD mode, carrying out extensive analyses on the OP/ER. Moreover,  \cite{wu2019transceiver} proposes an alternative optimization-based algorithm to jointly optimize the power allocation, power splitting, receiver filter and transmit beamforming in a SWIPT-assisted C-NOMA system.

Recently, {\it reflecting intelligent surface} (RIS) has attracted remarkable research attention due to its capability of changing and customizing the wireless propagation environment, hence supporting high system throughput and being useful in indoor/outdoor scenarios where dense obstacles arise. The RIS is composed of scattering elements, {\it i.e.}, artificial meta-material structures composed of adaptive composite material layers, which can reflect incident electromagnetic waves and can be configured to increase the signal level in a specific direction for a priority user, and it is calling attention mainly due to its features: the capacity to be sustainable, low-power consumption, enhancing communications metrics, facility of implementation and installation, compatibility and low cost. In \cite{cheng2021downlink}, the RIS-assisted non-C-NOMA system has been analytically validated in terms of OP considering two-user scenarios; indeed, RIS-aided communication can substantially improve the system OP. In \cite{adaoRIS}, the impact of the BS-user direct link on the system performance is evaluated by comparing the RIS-aided system with the relay-aided system. Furthermore, the performance of RIS-aided non-C-NOMA system has been analyzed in \cite{YueRIS} from the perspective of imperfect SIC effects.
On the other hand, in \cite{peppasRIS}, the authors assess the impact of RIS phase shift design on the OP, ER, and bit error probability (BER) through two phase-shift configurations: random phase, and coherent phase-shifting under Nakagami-$m$ fading. Recently, \cite{LiRIS} provides valuable analytical results on the OP for downlink RIS-assisted backscatter communications with NOMA. In \cite{9269324}, the RIS-NOMA system OP under hardware impairments has been analyzed analytically: an accurate closed-form for the OP was developed. Very recently, the end-to-end channel statistics for both weakest and strongest users in a {\it non}-cooperative RIS-aided NOMA system under Nakagami-$m$ was derived in \cite{tahir}. In \cite{zuoCoopRIS}, a RIS-aided cooperative NOMA scheme was analyzed from the perspective of power consumption.

\subsection{Motivation and Contributions}

To further improve the throughput, reliability, and fairness of mobile devices, the integration of different technologies such as NOMA, cooperative system with SWIPT and RIS is very promising; furthermore, there are few works in the literature dealing with RIS-aided C-NOMA SWIPT systems \cite{9926196,9771850,elhattab2021reconfigurable}. A hybrid TS and PW EH relaying for RIS-NOMA system with transmit antenna selection is proposed in \cite{9926196}, while in \cite{elhattab2021reconfigurable} the authors intended to minimize the transmit power at both the BS and at the user-cooperating relay. Differently, in \cite{9771850}, the authors proposed an algorithm to jointly optimize the beamforming and the power splitting coefficient. However, to the best of our knowledge, so far, no studies have provided a solid investigation on the OP/ER performance of RIS-aided NOMA systems, the integration of RIS-aided cooperative communications assuming non-linear energy harvesting model operating under generalized Nakagami-$m$ fading channels.

Motivated by the aforementioned facts, in this work we aim to investigate the {\it potential benefits of combining RIS, NOMA, and cooperative communications with SWIPT and non-linear EH circuits}. Herein, in order to suitably unveil the system OP performance, we focus on a two-user scenario with perfect SIC and perfect knowing channel state information (CSI) at the BS\footnote{Most complex, intricate scenarios, and performance metrics such as secrecy outage, sum rate, etc, are out of the scope of this work and will be treated in future works.}.

\vspace{2mm}
In light of the above motivations and challenges, the main \textbf{\textit{contributions}} of this work are threefold and can be summarized as follows:
\begin{itemize}
\item We analyze a cooperative NOMA system scenario under Nakagami-$m$ fading channels, combined with the SWIPT technique adopting a non-linear energy harvesting model, where the cell-center device cooperates with the cell-edge device without harming itself in terms of battery lifetime. Besides, we investigate the use of RIS aiming to identify its benefits and drawbacks operating under the considered scenario. 
\item  We derive novel general expressions for the OP and upper bound of ER for the analyzed system. Since such expression is parameterized w.r.t. the system and channel parameters, the effect of each parameter can be effectively scanned as a function of the number of RIS elements. 
\item  Comprehensive numerical results simulations for the  OP, ER and user rate corroborating the effectiveness and accuracy of the proposed analytical performance expressions.

\end{itemize}
The adopted methodology allows us to assess analytically the RIS-aided cooperative SWIPT-NOMA performance, which is essential to predict how the system acts in practice. 

\vspace{2mm}
\noindent\textbf{\textit{Notation}}:  
$\Gamma(\cdot)$ is the gamma function; $\gamma(\cdot,\cdot)$ is the lower incomplete gamma function; $\Gamma(\cdot,\cdot)$ is the upper incomplete gamma function; ${\rm Ei}(\cdot)$ is exponential integral function;  $X \sim \mathcal{CN}(\mu,\sigma^2)$ denotes a random variable $X$ following a Complex Normal distribution with mean $\mu$ and variance $\sigma^2$; $X \sim \text{Gamma}(k,\theta)$ denotes a random variable following a Gamma distribution with shape parameter $k$ and scale parameter $\delta$; $X \sim \text{Exponential}(\lambda)$ denotes a random variable following a Exponential distribution with rate parameter $\lambda$; $X \sim \text{Rayleigh}(\sigma)$ denotes a random variable following a Rayleigh distribution with scale parameter $\sigma$; the magnitude of a complex number $z$ is expressed by $|z|$; $\text{arg}(\cdot)$ denotes the argument of a complex number; $\text{diag}(\cdot)$ denotes the diagonal operator; vectors and matrices are represented by bold-face letters; $F_{X}$ denotes the cumulative density function (CDF) of $X$; $f_{X}$ denotes the probability density function (PDF); and ${\rm Pr}(\cdot)$ expresses probability.

\section{System Model}\label{sec:sys_model}
Let us consider a RIS-assisted C-NOMA downlink where a source ($S$) equipped with a single antenna simultaneously serves two devices. Let us denote the cell-center device as $D_1$, while $D_2$ is the cell-edge device as represented in Fig. \ref{fig:sys}. The transmission process is assisted by an $N$-elements RIS.
\begin{figure}[!htbp]
\centering
\includegraphics[trim={1cm 10cm 2cm 10cm},clip,width=.75\linewidth]{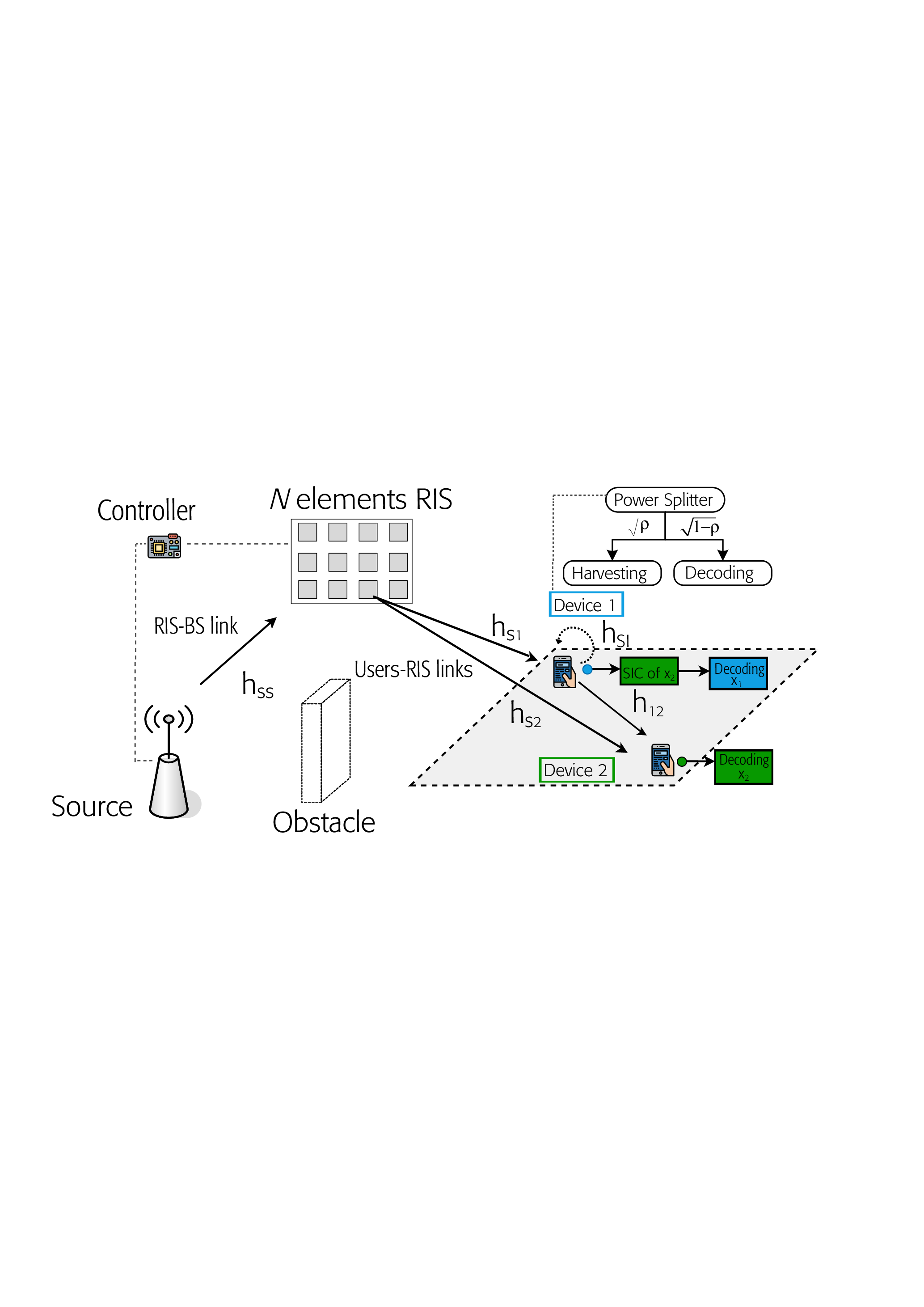}
\vspace{-6mm}
\caption{Downlink C-FD-SWIPT-NOMA system model, consisting of a source ($S$) and two paired devices, $D_1$ and $D_2$. To improve the performance of $D_2$, the device $D_1$ acts as a full-duplex relay under the self-interference (SI) effect.}
\label{fig:sys}
\end{figure}

The direct link between the source and the devices is assumed to be completely obstructed. The application scenarios for the system model illustrated in Fig. \ref{fig:sys} are leveraged with the advent of IoT systems. The adopted setup directly applies to environments where D2D communication can benefit, including offices and residences. Furthermore, there is a vast potential for application in industries and
communication/automation between industrial robots.

To ensure QoS to the system, $D_1$ can act as a cooperative relay adopting the decode-and-forward (DF) protocol \cite{wan2017cooperativeDF}, \cite{Weid2d}. In order to reach ultra-low latency in the cooperative framework, a primordial feature in the 5G and 6G communication systems, in this work we have assumed that the cooperative device ($D_1$) is equipped with two antennas, where the first one is responsible for receiving the signal and the other antenna to relaying \cite{alsaba2018full}; hence, the cooperative process (at the user-relay) can occur in an FD mode \cite{huang2019energy,liau2018cooperative,tran2021swipt,wu2019transceiver,zeng2020cooperative} but subject to self-interference (SI). SI or loop-interference (LI) refers to the signal that is transmitted by the transmitter antenna operating in the FD node and looped back to the receiver antenna at the same node \cite{zhang2016full}.

To do not jeopardize the battery lifetime of $D_1$, such a device can take advantage of the SWIPT technique by adopting the PS architecture \cite{khaledPS,trin}, where a fraction $\rho$ of the received power is utilized to energy harvesting and the remaining fraction $(1-\rho)$ to perform information decoding. The energy harvested can be fully utilized to relay the rebuilt message version of $D_2$.

The RIS operation is encapsulated by its phase shift matrix, which is assumed to be ideally a lossless surface and given as
\begin{equation}
\boldsymbol{\Phi} = {\rm{diag}}( e^{j\phi_1}, e^{j\phi_2},\dots, e^{j\phi_N})
\end{equation}
where $\phi_n \in (0,2\pi]$ is the phase shift applied to the $n$-th element of the RIS. We consider that the RIS is programmed by a dedicated controller connected to the $S$ via a high-speed backhaul \cite{do2021multi}, whose main aim is to update systematically the RIS phases shift at each coherence-time once the phase-shift variables are directly related to the CSI. 

We assume that all link channels experience quasi-static flat fading over a coherence time and vary independently from one coherence time to another. The CSI is assumed to be perfectly known at the $S$. The small-scale channel from $S$ to the RIS is denoted by $\mathbf{h}_{ss}$ $\in \mathbb{C}^{N}$, while the small-scale channel between the RIS and the $\ell$-th device is denoted by $\mathbf{h}_{s\ell}$ $\in$ $\mathbb{C}^{N}$ $\forall \ell \in \{1,2\}$. It is assumed that  $|[\mathbf{h}_{ss}]_n| = |h_{ss,n}|$ and  $|[\mathbf{h}_{s\ell}]_n| = |h_{s\ell,n}|$ are independent r.v. following Nakagami-$m$ fading, $i.e.$, 
\begin{IEEEeqnarray}{rCl}
    &&|h_{ss,n}| \sim {\rm Nakagami}(m_{ss},\Omega_{ss}), \quad \forall n \in \{1,\dots,N\} \nonumber 
    \\
    &&|h_{s\ell,n}| \sim {\rm Nakagami}(m_{s\ell},\Omega_{s\ell}), \quad \forall \ell \in \{1,2\}, 
\end{IEEEeqnarray}
while the phases are uniformly distributed in the $[0,2\pi]$ range. Hence, the equivalent channel for the $\ell$-th device can be written as:
\begin{equation} 
h_\ell = \sqrt{\beta_{ss} \beta_{s\ell}}\mathbf{h}_{ss}^H \mathbf{\Theta} \mathbf{h}_{s\ell},
\end{equation}
where $\beta_{ss}$ and $\beta_{s\ell}$ are the large-scale fading of $S$ $\rightarrow$ RIS link and RIS $\rightarrow$ $D_\ell$ link respectively.
Besides, for the $D_1$ $\rightarrow$ $D_2$ communication link (D2D), the 
$h_{12}$ follows a complex Gaussian distribution, {\it i.e.}, 
$h_{12} \sim \mathcal{CN}(0,\sqrt{\beta_{12}})$ $\forall n \in \{1,\dots,N\}$, where $\beta_{12}$ is the path loss of D2D communication.

In this work, we consider the RIS serving the user-relay with a coherent combination; thus, the phases shift of RIS are set to
\begin{IEEEeqnarray}{rCl}
    \phi_n = -\angle h_{ss,n} - \angle h_{s1,n}, \qquad &&\forall n=\{1,\dots,N\}
\end{IEEEeqnarray}

Therefore, the cascaded channel can be written as
\begin{IEEEeqnarray}{rCl} \label{eq:eqh}
    h_\ell &=& \sqrt{\beta_{ss} \beta_{s\ell}} \mathbf{h}_{ss} \boldsymbol{\Phi} \mathbf{h}_{s\ell}, \qquad \forall \ell \in \{1,2\} 
\end{IEEEeqnarray}
with $\mathbf{h}_{ss} \boldsymbol{\Phi} \mathbf{h}_{s1}=\sum_{n=1}^{N} |h_{ss,n}| |h_{s1,n}|$ being a real-number\footnote{Sum of product between two Nakagami-$m$ r.v., since the RIS is programmed to cancel the phase of $h_{ss,n}$ and $h_{s1,n}$} and $\mathbf{h}_{ss} \boldsymbol{\Phi} \mathbf{h}_{s2} = \sum_{n=1}^{N}e^{j\phi_n}  h_{ss,n} h_{s2,n}$, a complex number once the phases of RIS appear random for $D_2$.

\subsection{Signal Model}
In NOMA, $S$ transmits a superimposed signal $x(t)$ which propagates in direction to the devices through the RIS, with $x = \sqrt{\alpha_1}x_1 + \sqrt{\alpha_2}x_2$, where $\alpha_\ell \in (0,1)$ denotes the power allocation coefficients, with $\alpha_1 + \alpha_2 = 1$, $\alpha_2 > \alpha_1$ and $x_\ell$ with $\mathbb{E}[|x_\ell|^2] = 1$ is the message of $\ell$-th device, $\ell \in \{1,2\}$.

\subsubsection{Device 1} The observation at the $D_1$ which will be designated for ID can be written as follows
\begin{IEEEeqnarray}{rCl} \label{eq:receivedstrong}
\hspace{-5mm}{\rm{y}}_1^{\rm{ID}}(t) &=& \underbrace{h_1 \sqrt{P_t(1-\rho)} {x}(t)}_{ \substack{\text{Superimposed} \\ \text{information}}} + 
\underbrace{h_{SI} \sqrt{P_H} \hat{x}_2(t-\tau)}_{\text{Self-interference}} + \underbrace{n_1(t)}_{\text{noise}},
\end{IEEEeqnarray}
where $0\leq\rho\leq 1$ is the received power fraction utilized  to energy harvesting (EH factor), $P_{t}$ is the transmit power and the SI term comes from by adopting the FD mode at the user-relay. In this work, we consider that channel coefficient related to the SI, $h_{SI}$, undergo a zero mean complex-Normal distribution \cite{liu2022outage,liu2022system,jin2022secure,YueCoop2}, with power $|h_{SI}|^2=\omega$.  
Besides, $\hat{x}_2(t-\tau)$ is the retransmitted and rebuilt message of $D_2$ by $D_1$, $\tau$ denotes the processing delay at the user-relay caused by FD mode, which is assumed lower than the coherence time. The additive white Gaussian noise (AWGN) at the $D_1$ is modeled as $n_1 \sim \mathcal{CN}(0,\sigma^2_1)$.

\subsubsection{Device 2}
The observation at the $D_2$ can be written as follows
\begin{IEEEeqnarray}{rCl}
{\rm{y}}_2(t) = \underbrace{h_1 \sqrt{P_t} x(t)}_{\substack{\text{Superimposed} \\ \text{information}}}  + \underbrace{h_{12} \sqrt{P_{H}} \Hat{x}_2(t-\tau)}_{\substack{\text{Cooperative} \\ \text{transmission}}} + \underbrace{n_2(t)}_{\text{noise}},
\end{IEEEeqnarray}
where $n_2$ is the AWGN at the $D_2$. 

\vspace{2mm}
\noindent\textbf{\textit{Non-Linear EH Model}}. For the EH process, we employ a practical non-linear model \cite{boshkovska2015practical}; thus, the power deployed in the relaying step at $D_1$ ($P_H$) can be expressed as:
\begin{equation}\label{eq:EHmodel}
{P_{H}(P_{in}) = \frac{P_{th}\left( \frac{ 1}{1+e^{-a(\rho P_{in}-b)}} - \frac{1}{1+e^{ab}} \right)}{1-\frac{1}{1+e^{a b}}}},
\end{equation}
where $P_{th}$ is the threshold harvested power in saturation, $a$ and $b$ are constants related to the EH circuits as capacitance, resistance, and diode turn-on voltage. 
The adopted non-linear EH model from \cite{boshkovska2015practical} closely matches experimental/practical EH circuit results for both the low ($\mu$W) and high ($m$W) wireless power harvested regime. The RF input power in the EH circuit at $D_1$ is defined as:

\begin{equation}
    P_{in} = P_t  |h_1|^2.
\end{equation}

\subsection{Signal-Interference-to-Noise-Ratio}
$D_1$ receives a superimposed message ${\rm y}_1$ from the $S$ $\rightarrow$ RIS $\rightarrow$ $D_1$, so, according to NOMA the message of $D_2$ is detected fist, and the corresponding signal-to-interference-plus-noise (SINR) is given by
\begin{equation}
    {\rm SINR}_{D_1}^{x_2} = \frac{|h_1|^2 (1-\rho) P_t \alpha_2 }{ |h_1|^2  (1-\rho) P_t \alpha_1 +  |h_{SI}|^2P_H + \sigma^2},
\end{equation}
where we consider without loss of generalization that $\sigma_1^2 = \sigma_2^2 = \sigma^2$. After message $x_2$ is detected, it is eliminated from the received signal \eqref{eq:receivedstrong} by performing the SIC process\footnote{Here we consider that the SIC process is performed perfectly, $i.e.$, we do not take into account an eventual residual error from this process.}, thus, the SINR in $D_1$ for detecting $x_1$ is given by

\begin{equation} \label{eq:SINRd1x1}
    {\rm SINR}_{D_1}^{x_1} = \frac{|h_1|^2  (1-\rho) P_t \alpha_1 }{ |h_{SI}|^2 P_H + \sigma^2}.
\end{equation}

At $D_2$, the received SINR to detect $x_2$ from $S$ $\rightarrow$ RIS $\rightarrow$ $D_2$ link, and the received SNR to detect $x_2$ from $D_1$ $\rightarrow$ $D_2$ link are respectively expressed as
\begin{equation}
     {\rm SINR}_{D_2,S}^{x_2} = \frac{|h_2|^2 P_t \alpha_2 }{|h_2|^2 P_t \alpha_1 + \sigma^2},
\end{equation}
and
\begin{equation}
    {\rm SNR}_{D_2,D_1}^{x_2} =  \frac{P_{H} |h_{12}|^2}{\sigma^2}.
\end{equation}

By adopting the maximum ratio combining (MRC) rule, the overall SINR in $D_2$ is equivalent to the sum of SINR from $S$ $\rightarrow$ RIS $\rightarrow$ $D_2$ link and SNR from $D_1$ $\rightarrow$ $D_2$ link, which can be expressed as \cite{wu2019transceiver,liangcoop,YueCoop2}
\begin{IEEEeqnarray}{rCl}
{\rm SINR}_{D_2}^{x_2} &=& {\rm SINR}_{D_2,S}^{x_2} + {\rm SNR}_{D_2,D_1}^{x_2} \nonumber \\
&=&  \frac{|h_2|^2 P_t \alpha_2 }{|h_2|^2 P_t \alpha_1 + \sigma^2} + \frac{P_{H} |h_{12}|^2}{\sigma^2}.
\end{IEEEeqnarray}

\section{Statistics of channel and harvested power}

Our goal is to obtain new expressions that characterize the OP and ER for both devices in the RIS-aided C-NOMA-SWIPT. For that reason, in the following two-subsection we characterize statistically the cascaded channel for both devices as well as the harvested power in $D_1$ respectively.

\subsection{Statistical Channel Characterization}

Before proceeding to the OP derivation, it is paramount to characterize the channel statistically. Let us denote $X_1 =  \sum_{n=1}^{N} |h_{ss,n}| |h_{s1,n}| $ and $X_2 = \left| \sum_{n=1}^{N} e^{j\phi_n} h_{ss,n} h_{s2,n} \right|$. According to Lemma 2 of \cite{tahir2021outage}, the distribution of $X_1$ and $X_2$, can be approximated as 
\begin{IEEEeqnarray}{rCl}
    X_1 &\overset{\rm approx}{\sim}& {\rm Gamma}\left(N \frac{\mu_{ss}^2 \mu_{s1}^2}{1-\mu_{ss}^2 \mu_{s1}^2},\frac{{1-\mu_{ss}^2 \mu_{s1}^2}}{\mu_{ss} \mu_{s1}}\right),
    \\
    X_2 &\overset{\rm approx}{\sim}& {\rm Rayleigh}\left(\sqrt{\frac{N}{2}}\right),
\end{IEEEeqnarray}
where $\mu_{ss}$ and $\mu_{s1}$ are the mean of a Nakagami-$m$ r.v. given as \cite[Table 5.2]{papoulis1989probability}

\begin{equation}
    \mu_{ss} = \frac{\Gamma(m_{ss}+1/2)}{\Gamma(m_{ss})} \sqrt{\frac{\Omega_{ss}}{m_{ss}}}, \quad \mu_{s1} = \frac{\Gamma(m_{s1}+1/2)}{\Gamma(m_{s1})} \sqrt{\frac{\Omega_{s1}}{m_{s1}}}.
\end{equation}

Proceeding, Lemma 1 of \cite{tahir2021outage} states that the distribution of $X_1^2$ and $X_2^2$, can be approximated respectively as
\begin{IEEEeqnarray}{rCl}
    &X_1^2& \hspace{0.1cm} \overset{\rm approx}{\sim} {\rm Gamma}\left(k_1,\theta_1\right), \\
    &X_2^2& \hspace{0.1cm} \overset{\rm approx}{\sim} {\rm Exponential}\left(N\right),
\end{IEEEeqnarray}
where 
\begin{equation}
    k_1 = \frac{\left(\mu^{(2)}_{X_1}\right)^2}{\mu_{X_1}^{(4)} - \left(\mu^{(2)}_{X_1}\right)^2}, \quad \theta_1 = \frac{\mu_{X_1}^{(4)} - \left(\mu^{(2)}_{X_1}\right)^2}{\mu^{(2)}_{X_1}},
\end{equation}
with $\mu_{X_1}^{(m)}$ being the $m$-th moment of a Gamma r.v. given as
\begin{IEEEeqnarray}{rCl}
    \mu_{X
    _1}^{(m)} &=& \frac{\left(\frac{{1-\mu_{ss}^2 \mu_{s1}^2}}{\mu_{ss} \mu_{s1}}\right)^m\Gamma\left(N \frac{\mu_{ss}^2 \mu_{s1}^2}{1-\mu_{ss}^2 \mu_{s1}^2} + m\right)}{\Gamma\left(N \frac{\mu_{ss}^2 \mu_{s1}^2}{1-\mu_{ss}^2 \mu_{s1}^2}\right)},
\end{IEEEeqnarray}
thus, the PDF and CDF of $X_1^2$ and $X_2^2$ can be written respectively as

\begin{IEEEeqnarray}{rCl} \label{eq:normpdf}
    f_{X_1^2}(x) &=& \frac{x^{k1-1} e^{\frac{-x}{\theta_1}}}{\Gamma(k_1)\theta_1^{k_1}}, \quad  f_{X_2^2}(x) =\frac{1}{N}e^{-\frac{x}{N}}, \\
    F_{X_1^2}(x) &=& \frac{\gamma\left(k1,\frac{x}{\theta_1}\right)}{\Gamma(k_1)}, \quad F_{X_2^2}(x) = 1 - e^{-\frac{x}{N}}. \label{eq:normcdf}
\end{IEEEeqnarray}

\subsection{Statistical Harvested Power Characterization}
We also should analyze the distribution of the harvested power in $D_1$ when $D_1$ intends to act as a relay ($\rho\neq0$), thus for this purpose, the following lemma is conceived.

\begin{lemma} \label{lemma:PH}
Let $\zeta = aP_t\rho\beta_{ss}\beta_{s1}\theta_1$, the distribution of $P_H$ in Eq. \eqref{eq:EHmodel} can be approximated as a Gamma r.v. $P_H \overset{{\rm approx}}{\sim} {\rm Gamma}\left(k_{P_H},\frac{P_{th}}{1-\frac{1}{1+e^{ab}}}\theta_{P_H}\right)$ for $\zeta\ll 1$ where $k_{P_H}$ and $\theta_{P_H}$ are given respectively by

\begin{equation} \label{eq:kph}
    k_{P_H} = \frac{\left( \frac{(1+e^{ab})^{k_1-1}}{\left(1+e^{ab}\left(1-\zeta\right)\right)^{k_1}}-\frac{1}{1+e^{ab}}\right)^2}{\frac{(1+e^{ab})^{k_1-2}}{\left(1+e^{ab}\left(1-\zeta\right)\right)^{k_1}}-\frac{(1+e^{ab})^{2k_1-2}}{\left(1+e^{ab}\left(1-\zeta\right)\right)^{2k_1}}},
\end{equation}

\begin{equation} \label{eq:thetaph}
    \theta_{P_H} = \frac{\frac{(1+e^{ab})^{k_1-2}}{\left(1+e^{ab}\left(1-\zeta\right)\right)^{k_1}}-\frac{(1+e^{ab})^{2k_1-2}}{\left(1+e^{ab}\left(1-\zeta\right)\right)^{2k_1}}}{ \frac{(1+e^{ab})^{k_1-1}}{\left(1+e^{ab}\left(1-\zeta\right)\right)^{k_1}}-\frac{1}{1+e^{ab}}} .
\end{equation}

\end{lemma}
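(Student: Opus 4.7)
The plan is to apply the method-of-moments approximation of a nonnegative random variable by a Gamma law. First I would unfold the definition of $P_H$: since $P_{in}=P_t|h_1|^2=P_t\beta_{ss}\beta_{s1}X_1^2$ and $X_1^2\sim\mathrm{Gamma}(k_1,\theta_1)$ by the statistics derived in Section III, the variable $Y:=\rho P_{in}$ is Gamma-distributed with shape $k_1$ and scale $\theta_Y=\rho P_t\beta_{ss}\beta_{s1}\theta_1$, so that $\zeta=a\theta_Y$. Writing $P_{\rm th}'=P_{th}/\bigl(1-\tfrac{1}{1+e^{ab}}\bigr)$, the harvested power becomes $P_H=P_{\rm th}'\,g(Y)$ with $g(Y)=\sigma\bigl(a(Y-b)\bigr)-\sigma(-ab)$. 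The moment-matching identities for a Gamma law give $k_{P_H}=\mathbb{E}[P_H]^2/\mathrm{Var}(P_H)$ and scale $=\mathrm{Var}(P_H)/\mathbb{E}[P_H]$, so the task reduces to computing (approximately) the first two moments of $g(Y)$.

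Next I would handle the sigmoid, which is the non-trivial obstacle because $\mathbb{E}[\sigma(a(Y-b))]$ has no elementary closed form for a Gamma-distributed $Y$. Rewriting $\sigma(a(Y-b))=1/(1+e^{ab}e^{-aY})$ and using the small-$\zeta$ hypothesis, I would replace $e^{-aY}\approx 1-aY$ so that
\[
\frac{1}{1+e^{ab}e^{-aY}}\;\approx\;\frac{1}{(1+e^{ab})\bigl(1-\alpha Y\bigr)}\;\approx\;\frac{e^{\alpha Y}}{1+e^{ab}},\qquad \alpha=\frac{ae^{ab}}{1+e^{ab}},
\]
which is the unique exponential surrogate matching both the value and the derivative of the sigmoid at $Y=0$. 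This reduction is the key step, because the right-hand side is the MGF kernel of $Y$.

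Then the moments follow by direct evaluation of the Gamma MGF $\mathbb{E}[e^{sY}]=(1-s\theta_Y)^{-k_1}$. For the first moment,
\[
\mathbb{E}[e^{\alpha Y}]=\Bigl(1-\tfrac{e^{ab}\zeta}{1+e^{ab}}\Bigr)^{-k_1}=\frac{(1+e^{ab})^{k_1}}{(1+e^{ab}(1-\zeta))^{k_1}},
\]
so $\mathbb{E}[g(Y)]\approx\frac{(1+e^{ab})^{k_1-1}}{(1+e^{ab}(1-\zeta))^{k_1}}-\frac{1}{1+e^{ab}}$, which is the $(A{-}B)$ expression appearing in the numerator of \eqref{eq:kph}. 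The second raw moment of the exponential surrogate is obtained analogously from $\mathbb{E}[e^{2\alpha Y}]$, and combining $\mathbb{E}[g(Y)^2]$ with $(\mathbb{E}[g(Y)])^2$ yields the variance in the form $C-D$ displayed in the lemma.

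Finally, substituting the two approximate moments into the Gamma moment-matching identities and scaling by $P_{\rm th}'$ produces \eqref{eq:kph}--\eqref{eq:thetaph}. The only delicate point in the write-up will be a brief justification that the linearization $e^{-aY}\approx 1-aY$ (equivalently, the exponential fit of the sigmoid) introduces an error that is uniformly $O(\zeta^2)$ on the support where the Gamma density of $Y$ is non-negligible, so that both matched moments are accurate whenever $\zeta\ll 1$; this is precisely the hypothesis stated in the lemma.
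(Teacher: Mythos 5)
Your proposal follows essentially the same route as the paper's Appendix A: moment-matching the shifted sigmoid of the Gamma-distributed input power to a Gamma law, with the linearization $e^{-a\rho P_{in}}\approx 1-a\rho P_{in}$ (equivalently, your exponential surrogate $\tfrac{1}{1-\alpha Y}\approx e^{\alpha Y}$) reducing both moments to evaluations of the Gamma MGF, which the paper computes explicitly via the integral \cite[3.381.4]{gradshteyn2014table}. One minor observation: your second raw moment would carry $\left(1+e^{ab}(1-2\zeta)\right)^{k_1}$ in its denominator, which agrees with the paper's appendix derivation but not with the $\left(1-\zeta\right)$ printed in the lemma's displayed formulas --- an apparent typo in the statement rather than a gap in your argument.
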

\begin{proof}
    The proof is available in Appendix \ref{app:proofPH}
\end{proof}

\begin{figure}[!htbp]
\centering
\includegraphics[trim={0  0  0mm 0mm},clip,width=.75\linewidth]{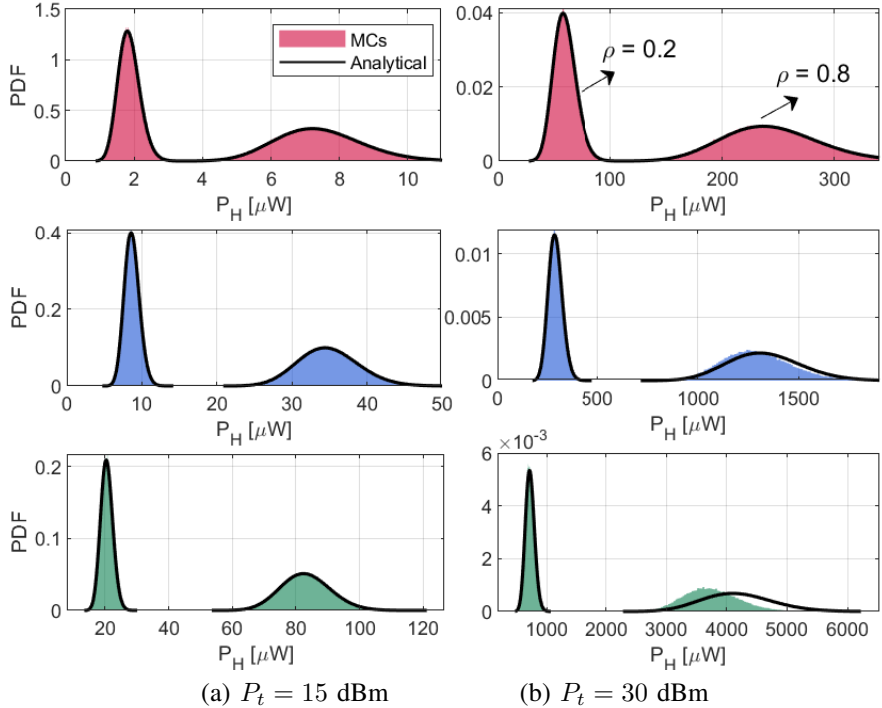}
\\
\small (a) $P_t=15$ dBm \hspace{15mm} (b) $P_t=30$ dBm
\caption{PDF of the harvested power $P_H$ for $N=30$ (red color), $N=65$ (blue color) and $N=100$ (green color), with $\rho=0.2$ (left slope) and $\rho=0.8$ (right slope). Another parameters are in Table \ref{t:param}.}
\end{figure}

\begin{remark}
The mean harvested power can be computed as 
\begin{IEEEeqnarray}{rCl} \label{eq:meanPH}
\bar{P}_H  &=& k_{P_H} \theta_{P_H} \nonumber
\\
&=& \frac{P_{th}}{1 - \frac{1}{1+e^{ab}}}\left(\frac{(1+e^{ab})^{k_1-1}}{\left(1+e^{ab}\left(1-\zeta\right)\right)^{k_1}}-\frac{1}{1+e^{ab}} \right)\hspace{-0.1cm}.
\end{IEEEeqnarray}
\end{remark}

\section{System Performance Analysis}\label{sec:analysis}
In order to explore the advantages in RIS-aided C-SWIPT-NOMA scenarios, in this section new expressions for the OP and SE of $D_1$ and $D_2$ in the RIS-aided  C-SWIPT-NOMA system considering generalized Nakagami-$m$ fading channels and non-linear energy harvesting model are presented, where we consider that target SINRs are determined by the devices’ QoS requirements. The OP metric is paramount to evaluate the reliability of the transmission in the 5G and 6G systems, specially in the URLLC use mode. 

\subsection{Device 1 Outage Probability}
Particularly, the outage behavior for $D_1$ occurs since $D_1$ cannot detect effectively $x_2$ and consequently its own message or when $x_2$ is detected successfully but an error occurs to decode $x_1$, $i.e.$, the outage occurs except for the case that both $D_2$ and itself message are decoded successfully. Mathematically it can be formulated by 
\begin{IEEEeqnarray}{rCl} \label{eq:PrD1}
    P_{\rm out}^{D_1} = 1 - {\rm Pr}({\rm SINR}_{D_1}^{x_2} \geq \gamma_{th2}, {\rm SINR}_{D_1}^{x_1} \geq \gamma_{th}),
\end{IEEEeqnarray}
where $\gamma_{th2} = 2^{R_2}-1$ and $\gamma_{th} = 2^{R_1}-1$, with $R_1$ and $R_2$ being the target rates of the $D_1$ and $D_2$, respectively. The following theorem provides the OP of $D_1$ for RIS-aided C-SWIPT-NOMA system.

\begin{theorem} \label{theo:OPD1}
Let $\xi_1=\frac{\gamma_{th}}{P_t(1-\rho)\alpha_1}$ and $\xi_2 = \frac{\gamma_{th2}}{P_t(1-\rho) (\alpha_2 - \alpha_1\gamma_{th2})}$, if $\alpha_2<\alpha_1\gamma_{th2}$ or $\frac{k_1\theta_1\beta_{ss}\beta_{s1}}{\xi_1}<\omega k_{P_H}\theta_{P_H}$, $P_{\rm out}^{D_1}=1$, otherwise  the closed-form expression for the OP of $D_1$ under Nakagami-$m$ fading is given by \eqref{eq:PoutD1rho0} when $\rho=0$ or $\omega=0$, or by \eqref{eq:PoutD1} at top of the next page when $\rho\neq0$ and $\omega \neq0$

\begin{numcases}{P_{\rm out}^{D_1}=} \label{eq:PoutD1rho0}
        \frac{\gamma\left(k_1, \frac{\xi_1 \sigma^2}{\theta_1\beta_{ss}\beta_{s1}} \right)}{\Gamma(k_1)}, & \qquad {\rm if } $\xi_1>\xi_2$ \nonumber
        \\
        \frac{\gamma\left(k_1,\frac{\xi_2 \sigma^2}{\theta_1\beta_{ss}\beta_{s1}}\right)}{\Gamma(k_1)}, & \qquad {\rm otherwise.}
\end{numcases}

\begin{figure*}[h]
\normalsize
\begin{numcases}{P_{\rm out}^{D_1} \approx} \label{eq:PoutD1}
   \frac{\gamma\left(k_1,\frac{\xi_{\ell} \sigma^2}{\theta_1\beta_{ss}\beta_{s1}}\right)}{\Gamma(k_1)} 
    +
    \frac{e^{\frac{\sigma^2}{\omega k_{P_H}\theta_{P_H}}}\xi_1 \omega k_{P_H} \theta_{P_H}}{\Gamma(k_1) } \frac{\Gamma\left(k_1,\xi_1\sigma^2 \left(\frac{1}{\theta_1\beta_{ss} \beta_{s1}} + \frac{1}{\xi_1 \omega k_{P_H} \theta_{P_H}}\right) \right)}{\left(\xi_1 \omega k_{P_H} \theta_{P_H}+\theta_1\beta_{ss} \beta_{s1} \right)^{k_1}} & $\xi_1 > \xi_2$ \nonumber
   \\
   \frac{\gamma\left(k_1,\frac{\xi_2 \sigma^2}{\theta_1\beta_{ss}\beta_{s1}}\right)}{\Gamma(k_1)} 
    +
    \frac{e^{\frac{\sigma^2}{\omega k_{P_H}\theta_{P_H}}}\xi_2 \omega k_{P_H} \theta_{P_H}}{\Gamma(k_1) } \frac{\Gamma\left(k_1,\xi_2\sigma^2 \left(\frac{1}{\theta_1\beta_{ss} \beta_{s1}} + \frac{1}{\xi_2\omega k_{P_H} \theta_{P_H}}\right) \right)}{\left(\xi_2 \omega k_{P_H} \theta_{P_H}+\theta_1\beta_{ss} \beta_{s1} \right)^{k_1}} & $\xi_1 < \xi_2$ 
\end{numcases}
\vspace*{4pt}
\hrulefill
\end{figure*}
\end{theorem}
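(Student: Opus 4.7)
The plan is to collapse the two SINR threshold inequalities into a single linear condition on $|h_1|^2$ as a function of $P_H$, exploit the Gamma approximations from the channel lemma and Lemma~\ref{lemma:PH}, and finish by a one-dimensional integration against $f_{P_H}$.

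Step 1 (reduction to a single threshold): since $|h_{SI}|^2=\omega$ is deterministic, substituting the SINR expressions into \eqref{eq:PrD1} gives the two linear constraints
\begin{equation*}
|h_1|^2(1-\rho)P_t(\alpha_2-\alpha_1\gamma_{th2})\ge\gamma_{th2}(\omega P_H+\sigma^2),\qquad |h_1|^2(1-\rho)P_t\alpha_1\ge\gamma_{th}(\omega P_H+\sigma^2).
\end{equation*}
The first is infeasible when $\alpha_2\le\alpha_1\gamma_{th2}$, which immediately forces $P_{\rm out}^{D_1}=1$. Otherwise the complement-of-outage event collapses to $|h_1|^2\ge\xi_\ell(\omega P_H+\sigma^2)$ with $\xi_\ell=\max\{\xi_1,\xi_2\}$, explaining the two piecewise branches in \eqref{eq:PoutD1rho0}--\eqref{eq:PoutD1}.

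Step 2 (trivial case $\rho=0$ or $\omega=0$): the SI/EH contribution disappears and the condition reduces to $|h_1|^2\ge\xi_\ell\sigma^2$. Since $|h_1|^2=\beta_{ss}\beta_{s1}X_1^2$ with $X_1^2\sim{\rm Gamma}(k_1,\theta_1)$ by \eqref{eq:normcdf}, the Gamma CDF evaluated at $\xi_\ell\sigma^2/(\theta_1\beta_{ss}\beta_{s1})$ directly yields \eqref{eq:PoutD1rho0}.

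Step 3 (general case): by independence of $|h_1|^2$ and $P_H$,
\begin{equation*}
P_{\rm out}^{D_1}=\int_0^{\infty}F_{|h_1|^2}\!\bigl(\xi_\ell(\omega z+\sigma^2)\bigr)\,f_{P_H}(z)\,dz.
\end{equation*}
I would integrate by parts with $u=F_{|h_1|^2}$ and $v=-\bar F_{P_H}$, so the boundary term produces $F_{|h_1|^2}(\xi_\ell\sigma^2)$ (the first summand of \eqref{eq:PoutD1}) and the remainder is $\xi_\ell\omega\int_0^\infty\bar F_{P_H}(z)f_{|h_1|^2}(\xi_\ell(\omega z+\sigma^2))\,dz$. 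Replacing $\bar F_{P_H}(z)$ by the matched-mean exponential tail $e^{-z/(k_{P_H}\theta_{P_H})}$ (exact for $k_{P_H}=1$; accurate in the $\zeta\ll1$ regime of Lemma~\ref{lemma:PH} where the Gamma shape is already large) and substituting $u=\xi_\ell(\omega z+\sigma^2)$ converts the integral into $\int_{\xi_\ell\sigma^2}^{\infty}u^{k_1-1}e^{-\lambda u}\,du=\lambda^{-k_1}\Gamma(k_1,\lambda\xi_\ell\sigma^2)$, with $\lambda=\frac{1}{\theta_1\beta_{ss}\beta_{s1}}+\frac{1}{\xi_\ell\omega k_{P_H}\theta_{P_H}}$. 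Using $\lambda^{-k_1}(\theta_1\beta_{ss}\beta_{s1})^{-k_1}=(\xi_\ell\omega k_{P_H}\theta_{P_H})^{k_1}/(\theta_1\beta_{ss}\beta_{s1}+\xi_\ell\omega k_{P_H}\theta_{P_H})^{k_1}$ and factoring $e^{\sigma^2/(\omega k_{P_H}\theta_{P_H})}$ from $\Gamma(k_1,\lambda\xi_\ell\sigma^2)$ reproduces the closed form of \eqref{eq:PoutD1}.

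The extra infeasibility clause $k_1\theta_1\beta_{ss}\beta_{s1}/\xi_1<\omega k_{P_H}\theta_{P_H}$ is the mean-wise consistency bound $\mathbb{E}[|h_1|^2]<\xi_1\omega\,\mathbb{E}[P_H]$: below it the Gamma moment-match of Lemma~\ref{lemma:PH} cannot sustain a non-trivial complement-of-outage probability in its validity regime, so the cleanest route is to set $P_{\rm out}^{D_1}=1$ there. The main obstacle I expect is Step~3: an honest Gamma--Gamma convolution does not have elementary closed form, so the analysis must either defend the exponential-tail replacement for $\bar F_{P_H}$ as a tight approximation in the operating regime (arguably the cleanest path and the one matching \eqref{eq:PoutD1}) or instead carry through a series expansion that telescopes back to the displayed expression; the additional bookkeeping lies in tracking the $\xi_1\leftrightarrow\xi_2$ case split symmetrically across both branches.
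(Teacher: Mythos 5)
Your proposal is correct and follows essentially the same route as the paper's Appendix B: reduce the joint event to the single threshold $X_1^2\beta_{ss}\beta_{s1}<\xi_\ell(\,\cdot\,P_H+\sigma^2)$ with $\xi_\ell=\max\{\xi_1,\xi_2\}$, read off the Gamma CDF when $\rho=0$ or $\omega=0$, and in the general case replace the self-interference--times--harvested-power term by an exponential surrogate so that the remaining Gamma-weighted integral evaluates to the upper incomplete gamma function via [Gradshteyn 3.381.9]. The only cosmetic differences are that the paper conditions on $Z=X_1^2\beta_{ss}\beta_{s1}$ and approximates the full product $V=\xi_\ell|h_{SI}|^2P_H$ (with $|h_{SI}|^2$ random) as exponential, whereas you freeze $|h_{SI}|^2=\omega$ and exponentiate the tail of $P_H$ after an integration by parts --- both yield the identical integrand and final expression (your bookkeeping even correctly produces the exponent $k_1$ on $(\xi_\ell\omega k_{P_H}\theta_{P_H})$ in the numerator, which the paper's displayed formula drops).
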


\begin{proof}
    The proof is available in Appendix \ref{App:proofOPD1}.
\end{proof}

\subsection{Device 2 Outage Probability}
The outage behavior for the $D_2$ can occur  in two distinct ways: {\it 1)} $D_1$ detects effectively the $D_2$' message however the sum of the SINRs after MRC in $D_2$ is lower than the SINR of threshold {\bf or}  {\it 2)}  $D_1$ cannot detect effectively the $D_2$' message and the SINR becoming of the $S$ $\rightarrow$ RIS $\rightarrow$ $D_2$ is lower than the SINR of threshold. Therefore, the OP of $D_2$ can be formulated as 
\begin{IEEEeqnarray}{rCl} \label{eq:PrD2}
    P_{\rm out}^{D_2} &=&  
    {\rm Pr}({\rm SINR}_{D_1}^{x_2} < \gamma_{th2},{\rm SINR}_{D_2,S}^{x_2} < \gamma_{th2}) \nonumber
    \\
    &+& 
    {\rm Pr}({\rm SINR}_{D_1}^{x_2} \geq \gamma_{th2}, {\rm SINR}_{D_2}^{x_2} < \gamma_{th2}).
\end{IEEEeqnarray}

\begin{theorem} \label{theo:PoutD2}
  Let us define $I = 1 - e^{\frac{\sigma^2\gamma_{th2}}{k_{P_H}\theta_{P_H}\beta_{12}}}-e^{\frac{-\sigma^2\gamma_{th2}}{P_t\beta_{ss}\beta_{s2}N(\alpha_2-\alpha_1\gamma_{th2})}}   \frac{\left( e^{\frac{\sigma^2\gamma_{th2}}{P_t\beta_{ss}\beta_{s2}N(\alpha_2-\alpha_1\gamma_{th2})} - \frac{\gamma_{th2 \sigma^2}}{\beta_{12k_{P_H}\theta_{P_H}}}} - 1 \right)}{\frac{\beta_{12}k_{P_H}\theta_{P_H}}{P_t\beta_{ss}\beta_{s2}N(\alpha_2-\alpha_1\gamma_{th2})}-1}$, and let the following variable given as $\chi =\frac{\gamma\left(k_1,\frac{\xi_2 \sigma^2}{\theta_1\beta_{ss}\beta_{s1}}\right)}{\Gamma(k_1)} + \frac{e^{\frac{\sigma^2}{\omega k_{P_H}\theta_{P_H}}}\xi_2 \omega k_{P_H} \theta_{P_H}}{\Gamma(k_1) } \frac{\Gamma\left(k_1,\xi_2\sigma^2 \left(\frac{1}{\theta_1\beta_{ss} \beta_{s1}} + \frac{1}{\xi_2\omega k_{P_H} \theta_{P_H}}\right) \right)}{\left(\xi_2 \omega k_{P_H} \theta_{P_H}+\theta_1\beta_{ss} \beta_{s1} \right)^{k_1}}$, for the case when $\alpha_2<\alpha_1\gamma_{th2}$, then $P_{\rm out}^{\textsc{u}_2}=1$, otherwise, then the closed-form expression for the OP of $D_2$ under Nakagami-$m$ fading is given by 
\end{theorem}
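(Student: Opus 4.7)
The plan is to evaluate the two probabilities in Eq.~\eqref{eq:PrD2} separately, use the independence between $h_1$ (with $h_{SI}$) and $(h_2,h_{12})$ together with the Gamma surrogate from Lemma~\ref{lemma:PH} to decouple $P_H$ from $|h_1|^2$, and then identify the three factors. First I would dispose of the degenerate case $\alpha_2\leq\alpha_1\gamma_{th2}$: in that regime both ${\rm SINR}_{D_2,S}^{x_2}<\alpha_2/\alpha_1\leq\gamma_{th2}$ and ${\rm SINR}_{D_1}^{x_2}<\alpha_2/\alpha_1\leq\gamma_{th2}$ hold for every channel realisation, forcing $P_{\rm out}^{D_2}=1$. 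Assume henceforth $\alpha_2>\alpha_1\gamma_{th2}$. Under this assumption the two joint events in Eq.~\eqref{eq:PrD2} factorise, yielding $P_{\rm out}^{D_2}\approx \chi\cdot{\rm Pr}({\rm SINR}_{D_2,S}^{x_2}<\gamma_{th2})+(1-\chi)\cdot I$. The quantity $\chi$ is precisely ${\rm Pr}({\rm SINR}_{D_1}^{x_2}<\gamma_{th2})$: inspecting the proof of Theorem~\ref{theo:OPD1}, the joint event $\{|h_1|^2\geq\max(\xi_1,\xi_2)(\omega P_H+\sigma^2)\}$ collapses to its $\xi_2$-bounded sub-event, whose complement (after averaging over $P_H$) is exactly the $\xi_1<\xi_2$ branch of Eq.~\eqref{eq:PoutD1}. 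The middle factor is obtained by rearranging ${\rm SINR}_{D_2,S}^{x_2}\geq\gamma_{th2}$ into $|h_2|^2\geq\gamma_{th2}\sigma^2/\bigl(P_t(\alpha_2-\alpha_1\gamma_{th2})\bigr)$ and plugging the exponential CDF of $X_2^2$ from Eq.~\eqref{eq:normcdf}, producing a single $1-e^{-\mu\gamma_{th2}}$ term with $\mu=\sigma^2/\bigl(NP_t\beta_{ss}\beta_{s2}(\alpha_2-\alpha_1\gamma_{th2})\bigr)$.

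The main obstacle is computing $I={\rm Pr}({\rm SINR}_{D_2}^{x_2}<\gamma_{th2})$, the CDF of the MRC-combined sum $A+B$ with $A={\rm SINR}_{D_2,S}^{x_2}$ nonlinear in $|h_2|^2$ and $B=P_H|h_{12}|^2/\sigma^2$ coupling the harvested-power Gamma surrogate with the D2D fading. I would handle it by (i) replacing $P_H$ by its mean $k_{P_H}\theta_{P_H}$ from Lemma~\ref{lemma:PH}, so that $B$ becomes exponential with rate $\lambda=\sigma^2/(k_{P_H}\theta_{P_H}\beta_{12})$, and (ii) freezing the NOMA residual term $(\gamma_{th2}-b)\alpha_1$ inside $F_A(\gamma_{th2}-b)$ at its endpoint value $\alpha_1\gamma_{th2}$, so that $F_A(\gamma_{th2}-b)\approx 1-e^{-\mu(\gamma_{th2}-b)}$ becomes a plain exponential in $b$. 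The convolution $I=\int_0^{\gamma_{th2}}F_A(\gamma_{th2}-b)f_B(b)\,db$ then splits into an elementary piece $1-e^{-\lambda\gamma_{th2}}$ and the single exponential integral $\lambda e^{-\mu\gamma_{th2}}\int_0^{\gamma_{th2}}e^{(\mu-\lambda)b}\,db$, whose closed form reproduces exactly the expression stated for $I$, with the dimensionless ratio $\mu/\lambda-1=\beta_{12}k_{P_H}\theta_{P_H}/\bigl(NP_t\beta_{ss}\beta_{s2}(\alpha_2-\alpha_1\gamma_{th2})\bigr)-1$ populating the denominator.

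Assembling $\chi\bigl(1-e^{-\mu\gamma_{th2}}\bigr)+(1-\chi)\,I$ then yields the theorem's closed form. The delicate step is the treatment of $P_H$: replacing it by its mean is tight when the coefficient of variation of the Gamma surrogate of Lemma~\ref{lemma:PH} is small, which the lemma guarantees for moderate-to-large $N$ and $\zeta$; treating it as independent of $|h_1|^2$ when factorising the second term of Eq.~\eqref{eq:PrD2} is justified because the event $\{{\rm SINR}_{D_1}^{x_2}\geq\gamma_{th2}\}$ is dominated by the direct dependence on $|h_1|^2$ rather than by its indirect influence through $P_H$. The secondary approximation (ii) in the hard step remains tight as long as $\gamma_{th2}$ is bounded away from $\alpha_2/\alpha_1$, i.e., inside the usable NOMA region for $D_2$, which is the operating regime in which the expression is meant to be applied.
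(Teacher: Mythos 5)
Your proposal follows essentially the same route as the paper's Appendix~\ref{App:proofOPD2}: the identical two-event decomposition of Eq.~\eqref{eq:PrD2}, factorization via independence with $\chi$ taken from the $\xi_1<\xi_2$ branch of Theorem~\ref{theo:OPD1}, an exponential surrogate for $\overline{W}=P_H|h_{12}|^2/\sigma^2$ with mean $k_{P_H}\theta_{P_H}\beta_{12}/\sigma^2$, the linearization of the denominator using $\alpha_2\gg\alpha_1\gamma_{th2}$, and the same elementary convolution over $[0,\gamma_{th2}]$ producing $I$ and the assembly $I+\chi\left(1-I-e^{-\sigma^2\gamma_{th2}/(P_t\beta_{ss}\beta_{s2}N(\alpha_2-\alpha_1\gamma_{th2}))}\right)$. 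The only cosmetic difference is that you reach the exponential surrogate by freezing $P_H$ at its mean while the paper directly declares the product $\overline{W}$ exponential with the product-of-means parameter; the resulting integral and final expression coincide (your derivation in fact yields the correct negative exponent in the $e^{-\sigma^2\gamma_{th2}/(k_{P_H}\theta_{P_H}\beta_{12})}$ term, where the printed statement carries a sign typo).
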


\begin{numcases}{P_{\rm out}^{\textsc{u}_2} \approx } \label{eq:PoutD2}
\hspace{-0.1cm} 1 - e^{\frac{-\sigma^2 \gamma_{th2} }{P_t\beta_{ss}\beta_{s2}N(\alpha_2-\alpha_1\gamma_{th2})}}, \hspace{-0.35cm} & $\text{if}\; \rho=0$ \nonumber 
\\
\hspace{-0.1cm}I + \chi \left(\hspace{-0.1cm}1 - I- e^{\frac{-\sigma^2 \gamma_{th2} }{P_t\beta_{ss}\beta_{s2}N(\alpha_2-\alpha_1\gamma_{th2})}}\right) \hspace{-0.35cm} & $\text{if} \; \rho \neq 0$, 
\end{numcases}

\begin{proof}
    Please refer to Appendix \ref{App:proofOPD2}.
\end{proof}

\subsection{Device 1 Ergodic Rate}
In this subsection, to understand better the RIS-aided C-SWIPT-NOMA system with a non-linear EH model, we proposed an upper bound for the ER of $D_1$.

\begin{theorem}
Assuming the $D_1$ can successfully detect $D_2$ and itself message, then the upper bound spectral efficiency for $D_1$ under Nakagami-$m$ fading can be calculated as
\begin{figure*}[h]
\normalsize
\begin{equation} \label{eq:ERD1}
    R_1 \leq \log_2\left(1 - \frac{\beta_{ss}\beta_{s1}k_1\theta_1  (1-\rho) P_t \alpha_1 e^{-\frac{\sigma^2}{\omega k_{P_H}\theta_{P_H}}}{\rm Ei}\left(-\frac{\sigma^2}{\omega k_{P_H}\theta_{P_H}}\right)}{ \omega  k_{P_H} \theta_{P_H} }\right)
\end{equation}
\vspace*{4pt}
\hrule
\end{figure*}

\end{theorem}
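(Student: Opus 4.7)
The plan is to start from the definition $R_1=\mathbb{E}\!\left[\log_2\!\left(1+\text{SINR}_{D_1}^{x_1}\right)\right]$ and apply Jensen's inequality. Since $\log_2(1+x)$ is concave on $[0,\infty)$, we immediately get
\begin{equation*}
R_1 \;\leq\; \log_2\!\left(1+\mathbb{E}\!\left[\text{SINR}_{D_1}^{x_1}\right]\right).
\end{equation*}
Using \eqref{eq:SINRd1x1}, the remaining task is to evaluate $\mathbb{E}\!\left[\dfrac{|h_1|^2(1-\rho)P_t\alpha_1}{|h_{SI}|^2 P_H+\sigma^2}\right]$. Because $h_1$, $h_{SI}$ and $P_H$ (which is a deterministic function of $|h_1|^2$ through the EH block, but conditionally independent of the SI channel) live on statistically independent fading processes, one can factor the expectation into a numerator part and a denominator part. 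This is the standard upper-bound route used throughout the C-NOMA literature and it is the only way to get a closed form, because the exact ergodic rate would require a double integral over the Gamma-type $|h_1|^2$ and the SI-plus-noise denominator.

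For the numerator I would invoke the statistical characterization already established in \eqref{eq:normpdf}: $|h_1|^2=\beta_{ss}\beta_{s1}X_1^2$ with $X_1^2\sim\text{Gamma}(k_1,\theta_1)$, hence $\mathbb{E}[|h_1|^2]=\beta_{ss}\beta_{s1}k_1\theta_1$. For the denominator, $|h_{SI}|^2$ is exponential with mean $\omega$ and $P_H$ is approximated as $\text{Gamma}(k_{P_H},\theta_{P_H})$ by Lemma~\ref{lemma:PH}. I would then approximate the product $Y=|h_{SI}|^2P_H$ by an exponential r.v. with matched mean $\mathbb{E}[Y]=\omega k_{P_H}\theta_{P_H}$, which is the same moment-matching spirit used elsewhere in the paper and is consistent with the fact that $k_{P_H}\theta_{P_H}$ is the only EH statistic that appears in the final expression. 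Setting $\lambda=1/(\omega k_{P_H}\theta_{P_H})$ leaves
\begin{equation*}
\mathbb{E}\!\left[\frac{1}{Y+\sigma^2}\right]=\int_0^\infty\frac{\lambda e^{-\lambda y}}{y+\sigma^2}\,dy,
\end{equation*}
and the change of variable $t=\lambda(y+\sigma^2)$ turns this integral into $\lambda e^{\lambda\sigma^2}E_1(\lambda\sigma^2)=-\lambda e^{\lambda\sigma^2}\,\text{Ei}(-\lambda\sigma^2)$, using the standard identity relating the exponential integrals. Multiplying by the numerator factors and substituting back $\lambda$ recovers exactly the quantity inside the $\log_2(\,\cdot\,)$ in \eqref{eq:ERD1} (the minus sign in front of the fraction absorbs the negativity of $\text{Ei}(-\cdot)$, so the argument of $\log_2$ is greater than $1$, confirming the bound is non-vacuous).

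The main obstacle is honestly justifying the exponential approximation of $Y=|h_{SI}|^2P_H$. Two defensible routes are available: (i) condition on $P_H$, observe that $Y\mid P_H$ is $\text{Exp}(1/(\omega P_H))$, and then argue that for fixed-mean moment matching the unconditional $Y$ is well approximated by $\text{Exp}(1/\mathbb{E}[Y])$ whenever $\text{Var}(P_H)/\mathbb{E}[P_H]^2$ is small (which is the regime of interest, since the Gamma approximation of $P_H$ in Lemma~\ref{lemma:PH} concentrates as $N$ grows); or (ii) cite that this is the customary upper-bound device in analogous SWIPT works. I would go with route (i) because it also explains why the bound tightens with $N$. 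Everything else—Jensen, factoring by independence, and the exponential-integral computation—is a routine closed-form calculation once this approximation is accepted.
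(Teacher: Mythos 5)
Your proposal follows essentially the same route as the paper: Jensen's inequality on $\log_2(1+\mathrm{SINR}_{D_1}^{x_1})$, factoring the expectation into $\mathbb{E}[|h_1|^2]=\beta_{ss}\beta_{s1}k_1\theta_1$ times $\mathbb{E}\left[1/(|h_{SI}|^2P_H+\sigma^2)\right]$, treating $|h_{SI}|^2P_H$ as exponential with mean $\omega k_{P_H}\theta_{P_H}$, and evaluating the resulting integral via the exponential-integral identity (the paper cites Gradshteyn--Ryzhik 3.352.4). Your write-up is in fact more explicit than the paper about the two approximations it silently makes (the exponential moment-matching of the product $|h_{SI}|^2P_H$ and the factorization of the ratio despite $P_H$ depending on $|h_1|^2$), but the argument and the resulting bound are the same.
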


\begin{proof}
The ER of $D_1$ can be expressed through $\mathbb{E} \left[R_1\right] = \mathbb{E}\left[ \log_2(1+{\rm SINR}_{D_1}^{x_1})\right]$. Therefore, by utilizing the Jensen' inequality, we can obtain a upper bound for rate of $D_1$ as $\mathbb{E}[R_1] \leq \log_2\left(1 + \mathbb{E}\left[ \frac{|h_1|^2  (1-\rho) P_t \alpha_1 }{ |h_{SI}|^2 P_H + \sigma^2}\right] \right) $, where 

\begin{IEEEeqnarray}{rCl}
    \mathbb{E}\left[\frac{1}{|h_{SI}|^2 P_H+\sigma^2}  \right] &=& \frac{1}{\omega k_{P_H}\theta_{P_H}}\int_{0}^{\infty} \frac{e^{-\frac{x}{\omega k_{P_H}\theta_{P_H}}}}{x+\sigma^2} dx  \nonumber \\ 
    &=&  \frac{-e^{\frac{\sigma^2}{\omega k_{P_H}\theta_{P_H}}} {\rm Ei}\left(-\frac{\sigma^2}{\omega k_{P_H}\theta_{P_H}}\right)}{\omega k_{P_H}\theta_{P_H}}, \nonumber \\
\end{IEEEeqnarray}
where \cite[3.352.4]{gradshteyn2014table} is utilized.
\end{proof}

\subsection{Device 2 Ergodic Rate}
An upper bound for the ER of $D_2$ is proposed in the following theorem.
\begin{theorem} \label{t:R2}
Assuming the $D_2$ decoded successfully the message of $x_2$ from $S$ $\rightarrow${\rm RIS}$\rightarrow$ $D_2$ as well as the rebuilt message $\hat{x}_2$ from the cooperative link (since $D_1$ decoded successfully $x_2$ in order to relay it), the upper bound for ER of $D_2$ is given as

\begin{figure*}[h]
\normalsize
\begin{IEEEeqnarray}{rCl} \label{eq:ERD2}
    R_2 &\leq& \log_2\left(1 + \frac{\alpha_2\sigma^2e^{\frac{\sigma^2}{N\beta_{ss}\beta_{s2}P_t \alpha_1}}}{N\beta_{ss}\beta_{s2}P_t \alpha_1^2}  \Gamma\left(-1,\frac{\sigma^2}{N\beta_{ss}\beta_{s2}P_t \alpha_1}\right) +  \frac{k_{P_H}\theta_{P_H}\beta_{12}}{\sigma^2}\right) 
\end{IEEEeqnarray}
\hrule
\vspace*{4pt}
\end{figure*}
\end{theorem}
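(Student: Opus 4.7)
The plan is to start from the definition $R_2 = \log_2(1 + \mathrm{SINR}_{D_2}^{x_2})$ and apply Jensen's inequality to the concave function $\log_2(1+x)$, giving $\mathbb{E}[R_2] \leq \log_2(1 + \mathbb{E}[\mathrm{SINR}_{D_2}^{x_2}])$. Because the expected SINR is an expectation of a sum, it splits as
\begin{equation*}
\mathbb{E}[\mathrm{SINR}_{D_2}^{x_2}] = \mathbb{E}\!\left[\tfrac{|h_2|^2 P_t \alpha_2}{|h_2|^2 P_t \alpha_1 + \sigma^2}\right] + \mathbb{E}\!\left[\tfrac{P_H |h_{12}|^2}{\sigma^2}\right],
\end{equation*}
so it suffices to evaluate the two expectations separately.

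For the cooperative term, I would invoke independence of $P_H$ (characterized by Lemma \ref{lemma:PH}) and $|h_{12}|^2$ (exponential with mean $\beta_{12}$, since $h_{12}\sim\mathcal{CN}(0,\beta_{12})$). Then $\mathbb{E}[P_H]=k_{P_H}\theta_{P_H}$ is the mean of the matched Gamma approximation in \eqref{eq:meanPH}, yielding the last summand $k_{P_H}\theta_{P_H}\beta_{12}/\sigma^2$ directly.

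For the direct-link term, I would use that $|h_2|^2=\beta_{ss}\beta_{s2}X_2^2$ with $X_2^2\sim\mathrm{Exponential}(N)$ as established in \eqref{eq:normpdf}, so $|h_2|^2$ has PDF $\frac{1}{N\beta_{ss}\beta_{s2}}e^{-y/(N\beta_{ss}\beta_{s2})}$. Writing $A=N\beta_{ss}\beta_{s2}P_t\alpha_1$ and using the algebraic identity $\frac{y\alpha_2}{y\alpha_1+\sigma^2/P_t}=\frac{\alpha_2}{\alpha_1}\bigl(1-\frac{\sigma^2}{yP_t\alpha_1+\sigma^2}\bigr)$, the problem reduces to computing $\mathbb{E}[(|h_2|^2 P_t\alpha_1+\sigma^2)^{-1}]$. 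The substitution $u=yP_t\alpha_1+\sigma^2$ converts this integral into $\frac{e^{\sigma^2/A}}{A}E_1(\sigma^2/A)$, where $E_1$ is the exponential integral. Finally, the identity $\Gamma(-1,x)=\frac{e^{-x}}{x}-E_1(x)$ (obtained by one integration by parts in the definition $\Gamma(-1,x)=\int_x^\infty t^{-2}e^{-t}dt$) lets me merge the constant $\alpha_2/\alpha_1$ and the $E_1$ term into the single compact expression $\frac{\alpha_2\sigma^2 e^{\sigma^2/A}}{A\alpha_1}\Gamma(-1,\sigma^2/A)$ that appears in \eqref{eq:ERD2}.

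The main obstacle is not the integration itself (it is a standard exponential-integral evaluation) but recognizing the $\Gamma(-1,\cdot)$ representation: one must notice that after rearranging with the identity above, the $\alpha_2/\alpha_1$ constant exactly cancels the $e^{-x}/x$ piece of $\Gamma(-1,x)$ when $x=\sigma^2/A$, since $\sigma^2/(Ax)=1$. Without that observation the answer remains an algebraic sum of a constant and an $E_1$ term rather than the single term written in the theorem. Once both expectations are in closed form, inserting them back into the Jensen bound immediately yields \eqref{eq:ERD2}.
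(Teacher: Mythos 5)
Your proposal is correct and follows essentially the same route as the paper: Jensen's inequality on $\log_2(1+x)$, splitting $\mathbb{E}[\mathrm{SINR}_{D_2}^{x_2}]$ into the direct-link and cooperative terms, using independence and $\mathbb{E}[P_H]=k_{P_H}\theta_{P_H}$ for the latter, and evaluating the former against the exponential density of $|h_2|^2$. The only difference is cosmetic: the paper invokes the tabulated integral \cite[3.383.10]{gradshteyn2014table} to land directly on $\Gamma(-1,\cdot)$, whereas you re-derive that formula elementarily via the decomposition $\frac{y\alpha_2}{y\alpha_1+\sigma^2/P_t}=\frac{\alpha_2}{\alpha_1}\bigl(1-\frac{\sigma^2}{yP_t\alpha_1+\sigma^2}\bigr)$ and the identity $\Gamma(-1,x)=e^{-x}/x-E_1(x)$, which yields the identical closed form.
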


\begin{proof}
Realizing the same step as done in Appendix, we obtain Eq. \eqref{eq:ERD2}, where 

\begin{IEEEeqnarray}{rCl}
    \mathbb{E}\left[ \frac{|h_2|^2 P_t \alpha_2 }{|h_2|^2 P_t \alpha_1 + \sigma^2} \right] &=& \int_{0}^{\infty} \frac{|h_2|^2 P_t \alpha_2}{ |h_2|^2 P_t \alpha_1+\sigma^2} f_{|h_2|^2} d|h_2|^2 \nonumber
    \\
    &\overset{(i)}{=}& \frac{ \alpha_2}{\alpha_1N} \int_{0}^{\infty} \frac{ x  e^{-\frac{x}{N}} }{  x + \frac{\sigma^2}{\beta_{ss}\beta_{s2}P_t \alpha_1} } dx \nonumber
    \\
    &=&  \frac{\alpha_2\sigma^2}{N\beta_{ss}\beta_{s2}P_t \alpha_1^2} e^{\frac{\sigma^2}{N\beta_{ss}\beta_{s2}P_t \alpha_1}} \nonumber \\
    &\times& \Gamma\left(-1,\frac{\sigma^2}{N\beta_{ss}\beta_{s2}P_t \alpha_1}\right),
\end{IEEEeqnarray}
utilizing \eqref{eq:normpdf} and \cite[3.383.10]{gradshteyn2014table} is utilized in (i).
\end{proof}

\begin{corollary}
When $P_t\rightarrow\infty$, the ER of $D_2$, $R_2^{\infty}$ can be computed as
\end{corollary}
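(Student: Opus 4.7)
My plan is to evaluate the bound in Theorem~\ref{t:R2} term-by-term as $P_t\to\infty$. Introducing the shorthand $u = \frac{\sigma^2}{N\beta_{ss}\beta_{s2}P_t\alpha_1}$, which tends to $0^+$, I observe that the prefactor of the first term inside the logarithm can be rewritten as $\frac{\alpha_2\sigma^2}{N\beta_{ss}\beta_{s2}P_t\alpha_1^2}=\frac{\alpha_2}{\alpha_1}\,u$, so the entire first term collapses to $\frac{\alpha_2}{\alpha_1}\,u\,e^{u}\,\Gamma(-1,u)$. This isolates the asymptotic analysis to the scalar quantity $u\,e^{u}\,\Gamma(-1,u)$.

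For that scalar, I will integrate by parts (or equivalently apply \cite[8.4.15]{gradshteyn2014table}) to obtain the identity $\Gamma(-1,u)=\frac{e^{-u}}{u}-E_1(u)$, so that
\begin{equation}
u\,e^{u}\,\Gamma(-1,u) \;=\; 1 \;-\; u\,e^{u}\,E_1(u).
\end{equation}
The known small-argument expansion $E_1(u)=-\gamma_E-\ln u+O(u)$ then shows that $u\,E_1(u)\to 0$ as $u\to 0^+$ (the offending $u\ln u$ term vanishes in the limit), and because $e^u\to 1$, the scalar converges to $1$. Hence the first term inside the logarithm has limit $\frac{\alpha_2}{\alpha_1}$.

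For the second term $\frac{k_{P_H}\theta_{P_H}\beta_{12}}{\sigma^2}$, I will argue directly from the non-linear EH model \eqref{eq:EHmodel} rather than from the moment-matched Gamma approximation of Lemma~\ref{lemma:PH} (whose validity is tied to $\zeta\ll 1$). As $P_t\to\infty$, the input RF power $P_{in}=P_t|h_1|^2$ diverges almost surely, the logistic term $\frac{1}{1+e^{-a(\rho P_{in}-b)}}\to 1$, and consequently $P_H\to P_{th}$ with its variance collapsing to zero. Therefore the mean harvested power $\bar{P}_H=k_{P_H}\theta_{P_H}$ given by \eqref{eq:meanPH} converges to $P_{th}$, making the second term inside the logarithm tend to $\frac{P_{th}\beta_{12}}{\sigma^2}$.

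Combining the two limits and invoking continuity of $\log_2(1+\cdot)$, I obtain
\begin{equation}
R_2^{\infty}\;\leq\;\log_2\!\left(1+\frac{\alpha_2}{\alpha_1}+\frac{P_{th}\beta_{12}}{\sigma^2}\right),
\end{equation}
which is the claim of the corollary. The main obstacle is the second step: the singularity of $\Gamma(-1,u)$ at $u=0$ forces the factor-and-cancel rearrangement through $E_1(u)$, since a naive substitution $u=0$ produces an indeterminate $0\cdot\infty$. A minor secondary concern is justifying the saturation limit $\bar{P}_H\to P_{th}$ outside the $\zeta\ll 1$ regime of Lemma~\ref{lemma:PH}; I resolve this by going back to the deterministic saturation of \eqref{eq:EHmodel} rather than relying on the Gamma approximation for the tail behavior.
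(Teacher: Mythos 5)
Your limit computations are internally correct --- the identity $\Gamma(-1,u)=e^{-u}/u-E_1(u)$ and the conclusion $u\,e^{u}\,\Gamma(-1,u)\to 1$ are right, and the observation that the non-linear EH model saturates so that $P_H\to P_{th}$ is exactly the key step the paper also uses --- but you end up proving the wrong statement. What you compute is the $P_t\to\infty$ limit of the \emph{Jensen upper bound} of Theorem~\ref{t:R2}, namely $\log_2\bigl(1+\tfrac{\alpha_2}{\alpha_1}+\tfrac{P_{th}\beta_{12}}{\sigma^2}\bigr)$. The corollary, however, asserts the actual asymptotic ergodic rate: once $P_H=P_{th}$ and the direct-link SINR saturates at $\alpha_2/\alpha_1$, the paper evaluates
\begin{equation*}
R_2^{\infty}=\mathbb{E}\!\left[\log_2\!\left(1+\frac{\alpha_2}{\alpha_1}+\frac{P_{th}|h_{12}|^2}{\sigma^2}\right)\right]
=\int_0^{\infty}\log_2\!\left(1+\frac{\alpha_2}{\alpha_1}+\frac{P_{th}\beta_{12}x}{\sigma^2}\right)e^{-x}\,dx,
\end{equation*}
which via \cite[4.337.1]{abramowitz1972handbook} gives $\log_2(1+\alpha_2/\alpha_1)-e^{c}\,{\rm Ei}(-c)/\ln(2)$ with $c=\sigma^2(1+\alpha_2/\alpha_1)/(P_{th}\beta_{12})$, together with the trivial branch $\log_2(1+\alpha_2/\alpha_1)$ for $\rho=0$ (a case you omit entirely). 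Your expression replaces $|h_{12}|^2$ by its mean $\beta_{12}$ inside the logarithm and therefore strictly upper-bounds the corollary's value; the difference between the two is precisely the Jensen gap that the exponential-integral term captures. So taking the limit of the Theorem~\ref{t:R2} bound cannot yield the corollary's formula. To repair the argument, stop at the point where you establish $P_H\to P_{th}$, substitute that saturation into ${\rm SINR}_{D_2}^{x_2}$ directly rather than into the already-bounded expression, and then carry out the remaining one-dimensional expectation over the exponentially distributed $|h_{12}|^2$.
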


\begin{numcases}{=} \hspace{-0.1cm}
\log_2\left(1+\frac{\alpha_2}{\alpha_1}\right),& \hspace{-0.3cm} $\rho=0$ \nonumber
\\
\hspace{-0.1cm} \log_2\left(\hspace{-0.09cm}1\hspace{-0.09cm}+\hspace{-0.1cm}\frac{\alpha_2}{\alpha_1}\right) \hspace{-0.12cm} - \hspace{-0.1cm} \frac{e^{\frac{\sigma^2\left(1+\frac{\alpha_2}{\alpha_1}\right)}{P_{th}\beta_{12}}}{\rm Ei}\left(- \frac{\sigma^2\left(1+\frac{\alpha_2}{\alpha_1}\right)}{P_{th}\beta_{12}}\right)}{\ln(2)}, & \hspace{-0.3cm}  $\rho\neq0$ \nonumber\\ \hspace{-2cm} 
\label{eq:R2assym}
\end{numcases}

\begin{proof}
When $\rho=0$, it reaches the conventional NOMA case and can be solved trivially, otherwise, when $\rho\neq0$, and $P_t\rightarrow \infty$, due to the non-linear EH model, $P_{H} = P_{th}$, therefore the ER of $D_2$ is given as 
\begin{equation}
    R_2^{\infty} = \mathbb{E}\left[ \log_2\left(1 + \frac{\alpha_2}{\alpha_1} + \frac{P_{th}|h_{12}|^2}{\sigma^2} \right) \right],
\end{equation}
hence
\begin{IEEEeqnarray}{rCl}
    R_2^{\infty} &=& \int_{0}^{\infty} \log_2\left( 1 + \frac{\alpha_2}{\alpha_1} + \frac{P_{th} \beta_{12} x}{\sigma^2}\right) e^{-x} dx \nonumber 
    \\
    &\overset{(i)}{=}& \frac{\sigma^2}{P_{th}\beta_{12} \ln(2)} \int_{0}^{\infty} \ln\left(1 + \frac{\alpha_2}{\alpha_1} + x\right) e^{-\frac{\sigma^2x}{P_{th}\beta_{12}}} \nonumber
    \\
    &=&  \log_2\left(1+\frac{\alpha_2}{\alpha_1}\right) - \frac{e^{\frac{\sigma^2\left(1+\frac{\alpha_2}{\alpha_1}\right)}{P_{th}\beta_{12}}}}{\ln(2)} {\rm Ei}\left(- \frac{\sigma^2\left(1+\frac{\alpha_2}{\alpha_1}\right)}{P_{th}\beta_{12}}\right), \nonumber \\
\end{IEEEeqnarray}
where in $(i)$ \cite[4.337.1]{abramowitz1972handbook} is utilized.
\end{proof}

\section{Simulation Results}\label{sec:simul}
In this section, we aim to confirm through Monte-Carlo simulations (MCs) the accuracy of our previous mathematical analysis and illustrate the achievable enhanced performance of the RIS-aided cooperative FD-SWIPT-NOMA system. The simulations results are averaged over $10^6$ realizations. Unless stated otherwise, the parameter values adopted in this section are presented in Table \ref{t:param}. In the following numerical results, the Monte-Carlo simulations curves are labeled as "MCs", and the derived analytical expression-based curves are labeled as "Analytical". We use dashed lines to represent the $D_1$ performance, while $D_2$ is represented by solid lines. In addition, color red, blue and green, denotes the simulations for $N=30, 65$, and $100$ RIS elements, respectively.

\begin{table}[htbp!] 
\caption{Adopted Simulation Parameters.}
\footnotesize
\label{t:param} 
\begin{center}
\begin{tabular}{ll}
\toprule
\bf Parameter  & \bf  Value\\
\toprule
\multicolumn{2}{c}{\bf RIS-aided Cooperative FD-SWIPT-NOMA system}\\
\toprule
Transmit power          &   $P_t=[0,50]$ [dBm]\\
Noise power             &  $\sigma^2=-96$  dBm/Hz\\
Bandwidth               & $B=1$ MHz\\
\# RIS elements         &   $N = \{30;65;100\}$\\
Power Allocation coefficients            &   $\alpha_1$ = 0.9, $\alpha_2$ = 0.1\\ 
Target Rate &   $R_1 = 1.5$ $R_2=0.5$ [bits/s]\\
Residual Self-Interference              &   $h_{SI}\sim \mathcal{CN}(0,\, \omega)$, \\
& with  $\omega=$\,\footnotesize{$-[\infty;30;15]$} [dB]\\
\toprule
\multicolumn{2}{c}{\bf Non-Linear Energy Harvesting  Parameters}\\
\toprule
EH coefficient   & $\rho \in [0,1]$\\
EH model constants  & $a=150$; $b=0.014$ \cite{9736445,8652416}\\
Max. RF (harvested)  power & $P_{th} = 24$ [mW] \cite{9736445,8652416}\\
\toprule
\multicolumn{2}{c}{ \bf Channel Parameters}\\
\toprule
Channel Model ({\footnotesize $S$-RIS/RIS-$D$s})  &  Nakagami-$m$\\
Shape Parameter $S$-RIS                   & $m_{ss}$ = 3.5 \\
Shape Parameter RIS-$D$s                & $m_{s1}=2$ $m_{s2}=1$\\
Spread Parameter                         & $\Omega_{s1} = \Omega_{s2} = 1$\\
Channel Model (D2D)                      & $h_{12}\sim$ $\mathcal{CN}$(0,$\beta_{12}$) \\
Path losses &                        $\beta_{ss}=-30$dB, $\beta_{s1}=-30$dB \\
& $\beta_{s2}=-40$dB, $\beta_{12}=-15$dB\\
\toprule
 \end{tabular}
 \end{center}
\end{table}

\subsection{Ergodic Rate {\it vs} Transmit Power}

Fig. \ref{fig:ERvsPT} depicts the ER of $D_1$ and $D_2$  {\it vs} $P_t$ for three values of $N=\{30;65;100\}$ RIS-elements. Firstly we can notice that the derived upper bound expressions given by Eq. \eqref{eq:ERD1} and Eq. \eqref{eq:ERD2} respectively are very tight when $\rho=0$; furthermore, when $\rho=0.2$, the upper bound is especially tight fow low values of $P_t$, for any $N$, as $P_t$ increases, the ratio between the ER and the derived upper bound decreases. We also can note that after a specific value of $P_t$, the derived equations become inaccurate. This can be justified by the adoption made in Appendix \ref{app:proofPH}, where $aP_t\rho \beta_{ss}\beta_{s1}\theta_1 \ll 1$. We separate the accurate region (I) (denoted by low power regime and green color) and the inaccurate region (II) (high power regime red color). Notice that as stated, the regions vary according to $N$, $\rho$, and $P_t$ (since $\beta_{ss}$, $\beta_{s1}$ and $a$ are fixed values), when $N=30$, the region (II) occurs for $P_t>49$ dBm, as well as $P_t>43$ and $P_t>39$ for $N=65$ and $N=100$ respectively.

\begin{figure} 
\centering
\includegraphics[trim={0mm 0mm 0mm 0mm}, clip,width=.75\linewidth]{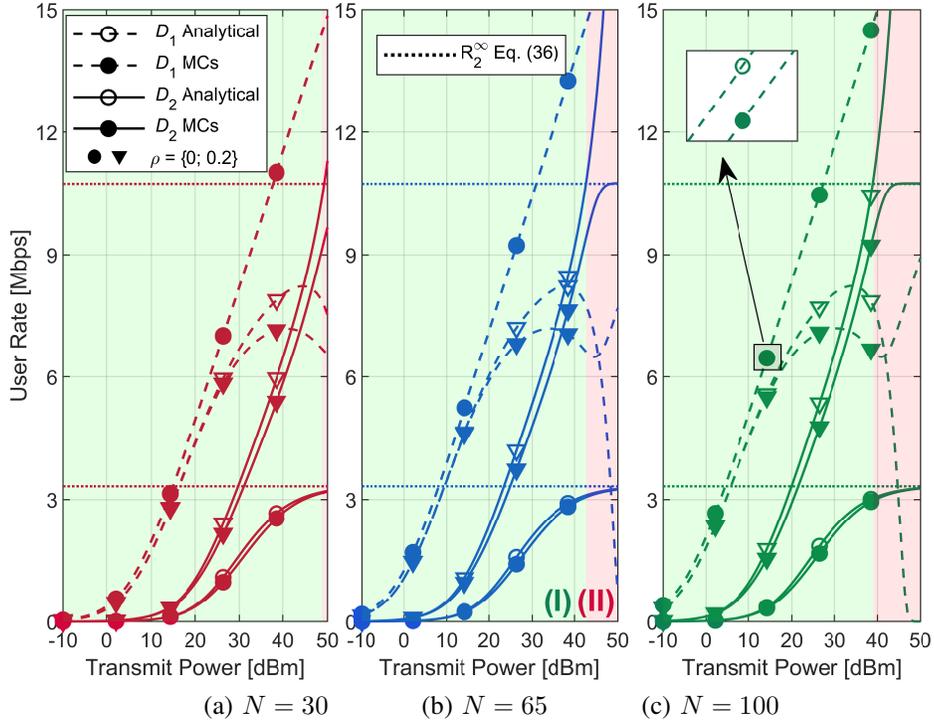}
\\
\small \hspace{5mm} (a) $N=30$ \hspace{10mm} (b) $N=65$ \hspace{10mm} (c) $N=100$
\caption{ER {\it vs} $P_t$ for three values of number of elements of RIS $N=\{30;65;100\}$ in RIS-aided C-SWIPT-NOMA system, with $\rho=\{0;0.2\}$.}
\label{fig:ERvsPT}
\end{figure}

Concerning the RIS-aided C-SWIPT-NOMA system performance, we can see  that when $\rho=0$, this configuration represents the conventional NOMA system, thus, for any value of $N$, $D_1$ naturally reaches higher rates than $D_2$, furthermore, $D_2$ reaches a maximum rate when $P_t\rightarrow \infty$, given by Eq. \eqref{eq:R2assym}. For the cooperative scenario, $\rho=0.2$, we can see that rate of $D_1$ increases initially till reaching a maximum value then it decreases and afterward increases again. This behavior can be explained by the non-linear EH model, once $P_t$ increases, $D_1$ can harvest a higher amount of power, and when it reaches the saturation $P_{th}$, the interference in the denominator of Eq. \eqref{eq:SINRd1x1} is limited, hence by increasing the transmit power, the rate of $D_1$ increases again. In addition, we can confirm that the value of maximum rate reached by $D_2$ is not dependent of $N$, $i.e.$, the RIS cannot contribute to increase the rate of $D_2$ when $P_t \rightarrow \infty$, however it can be achieved with lower power when $N$ increases. It is confirmed in Eq. \eqref{eq:R2assym}, where we can see that $R_2^{\infty}$ can vary in function of $\alpha_1,\alpha_2,P_{th},\beta_{12}$ and $\sigma^2$. 
\subsection{Harvested Power {\it vs} Transmit Power}

Fig. \ref{fig:PHvsPT} illustrate the behaviour of the harvested power $P_H$ parameterized in the transmitted power $P_t$ for the MCs and the analytical derived result. We firstly notice that Eq. \eqref{eq:meanPH} is very accurate for low and middle values of $P_{t}$, {\it i.e.}, $P_t\leq 30$ dBm. Besides, in this region we can see that the harvested power increases linearly with the transmitted power. We also can see that for higher values of $P_t$, the $P_H$ saturates in $P_{th}$, independently of value of $N$. This is expected since a non-linear EH model is adopted.

\begin{figure}[!htbp]  
\centering
\includegraphics[trim={0mm 0mm 0mm 0mm}, clip,width=.75\linewidth]{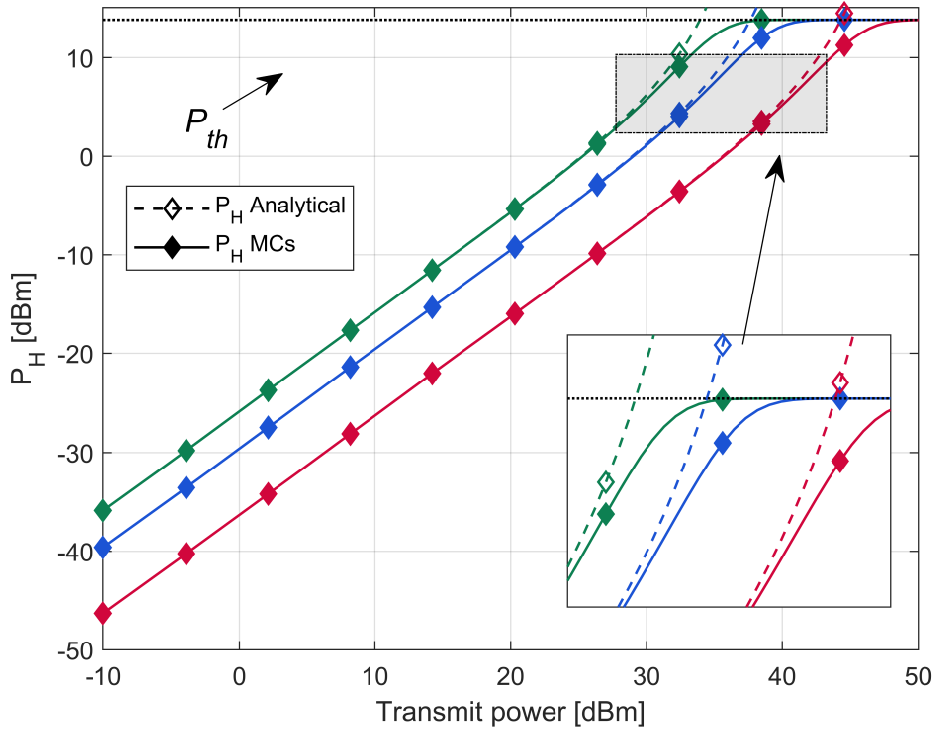}
\caption{$P_H$ {\it vs} $P_t$ with non-linear EH model, where $\rho=0.8$. Here we adopt $N=\{30;65;100\}$ as red, blue and green color respectively.  }
\label{fig:PHvsPT}
\end{figure}

\subsection{Outage Probability {\it vs} Transmit Power}

Fig. \ref{fig:OPvsPT} depicts the OP performance {\it vs} the transmit power in the node $S$. Firstly, one can infer that the derived equations \eqref{eq:PoutD1} and \eqref{eq:PoutD2} are very accurate for the values of $\rho$ and $P_t$ from $-10$ to $50$ dBm till the order of $10^{-5}$ for the OP. 
Besides, one can infer that in the cooperative scenario the reliability of device $D_2$ increases considerably, mainly when $N$ assumes high values; in contrast, due to power drainage to operate as a relay, the device $D_1$ has a loss of performance, as also confirmed in Fig. \ref{fig:ERvsPT} and \ref{fig:ERvsN}, considering $\rho=0.2$ and $0.5$, respectively. Herein, to understand the potential of the studied RIS-aided cooperative FD-SWIPT-NOMA system, we consider the ideal case where $\omega=0$, $i.e.$ there is no residual self-interference cancellation in the SIC stage. We can see that although $D_1$ operates in worst conditions when $\rho=0.2$, its OP performance degradation is marginal when related to the OP gain obtained by $D_2$, indicating that the cooperative scenario can be very interesting for $D_1$ if there is no residual self-interference and mainly for $D_2$ due to the remarkable improvement in the OP performance gain attained. Moreover, one can see that by increasing the value of $N$, the OP performance loss in $D_1$ can not be mitigated, while the gain of performance of $D_2$ becomes lower.

\begin{figure}[htbp!]
\normalsize
\centering
\includegraphics[trim={0mm 0mm 0mm 0mm}, clip,width=.75\linewidth]{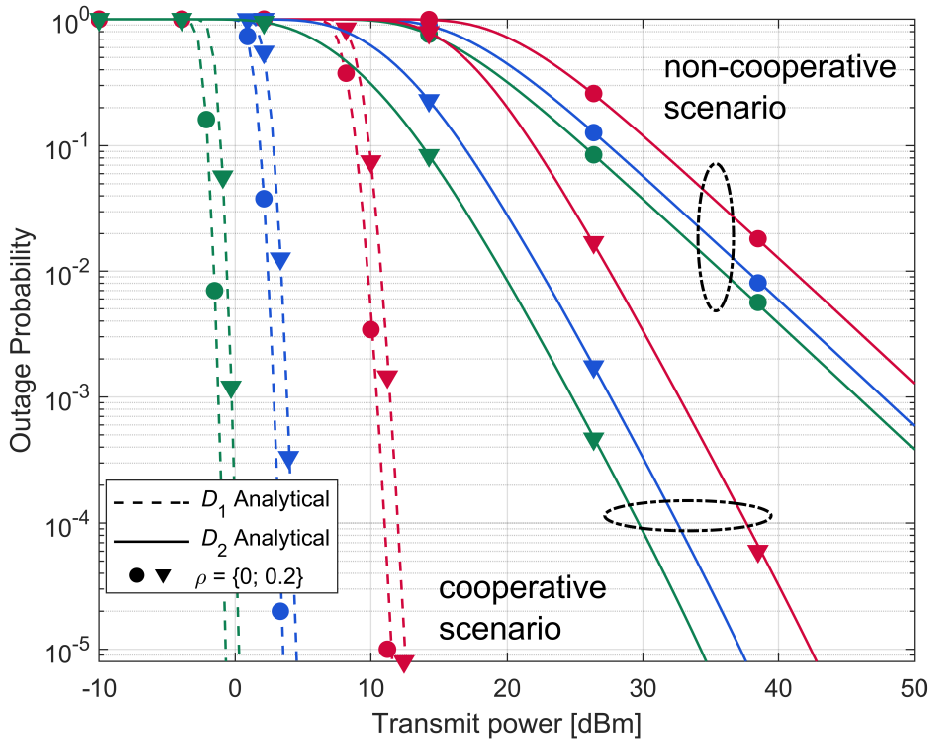}
\caption{OP {\it vs} $P_t$ for $N=\{30,65,100\}$ (red, blue and green, respectively), and $\rho=\{0;0.2\}$, where we consider the ideal case $\omega=0$, in order to see how is the best performance of the system with the considered parameters.}
\label{fig:OPvsPT}
\end{figure}

\subsection{Ergodic Rate {\it vs} RIS elements}

Fig. \ref{fig:ERvsN} illustrates the impact of the number of elements of RIS on the device rate ($D_1$ and $D_2$) when $P_{t}=15$ dBm for $\rho=0$ and $\rho=0.5$. 
Notice that our derived upper bounds given by Eqs. \eqref{eq:ERD1} and \eqref{eq:ERD2} can represent the behavior of the rate of both devices. Also, one can see that for the non-cooperative scenario ($\rho=0$), naturally, $D_1$ can achieve a substantial performance gain over $D_2$; furthermore, when $N$ increases, the enhanced performance is further highlighted in $D_1$ than $D_2$, it is due to the RIS being configured to boost $D_1$ (coherent phase shift matrix for $D_1$), while the rate of $D_2$ increases marginally (from $\approx 0.15$ to $\approx 0.4$) when $N$ becomes higher (once that random phase shift matrix is set). On the other hand, in the cooperative scenarios ($\rho=0.5$), one can notice that the $D_1$ rates are lower than in the former case, which is expected since $D_1$ drains energy to operate as a relay. Besides, notably, the impact of the $\omega$ in the ER of $D_1$ is very harmful if we assume high values of $\omega$ and its impact is highlighted for large values of $N$. However, in such a scenario, $D_2$ can achieve higher rates, and now, it effectively increases as $N$ increases, $e.g.$, the rate of $D_2$ increases from $\approx 0.7$ bits/s to $2.5$ bits/s, obtaining a very interesting gain in terms of ER that was not possible in the non-cooperative case. 
Effectively, RIS-aided systems can contribute to the cooperative system in order to provide better conditions for the cell-edge devices and consequently provide better fairness indexes for RIS-aided cooperative FD-SWIPT-NOMA communication systems.

\begin{figure}[htbp!]
\normalsize
\centering
\includegraphics[trim={0mm 0mm 0mm 0mm}, clip,width=.75\linewidth]{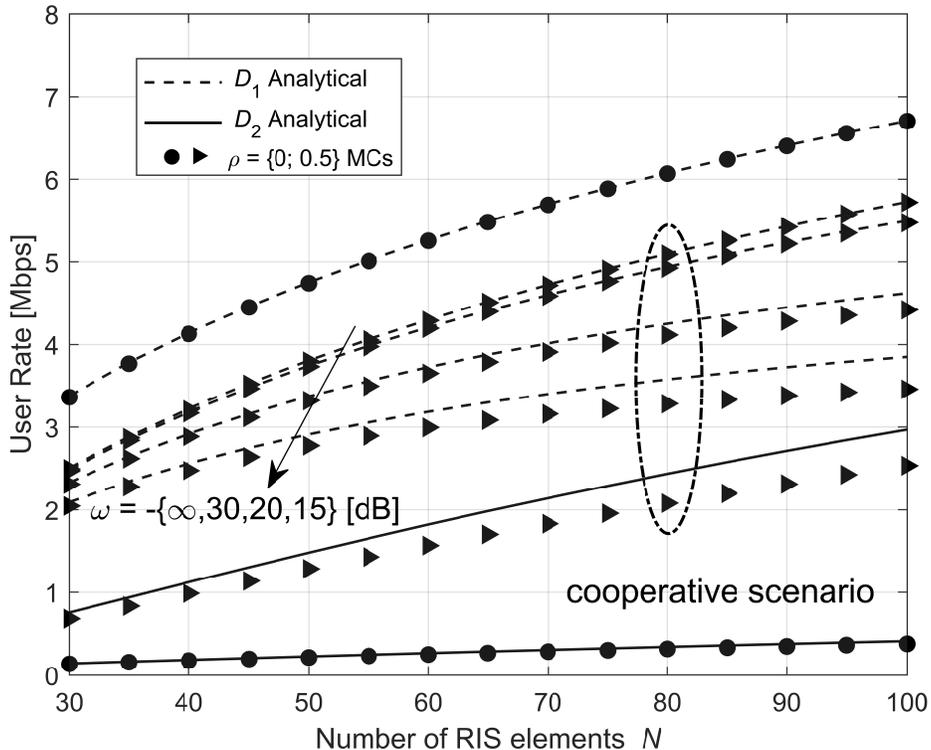}
\caption{ER {\it vs} $N$ for cooperative and non-cooperative scenario $\rho=\{0;0.5\}$, $P_t=15$ dBm and $\omega=\{-45,-30,-15\}$ dB. }
\label{fig:ERvsN}
\end{figure}

Next, to understand the impact of the residual self-interference on the OP, in Fig. \ref{fig:OPvsPT_omega} we studied how the performance of $D_1$ can be affected when the SIC operation is degraded by increasing the residual SI factor $\omega$; herein we have evaluated the OP degradation for $\omega\in\{-\infty,\, -20, \, -15\}$ dB and considering $N=100$ RIS antenna-elements. As a benchmark, we consider the linear EH model with an efficiency of $\eta=0.8$ to understand its impacts and differences when adopting a more realistic non-linear EH model. One can see that naturally, $D_1$ achieves better performance when $\omega=0$; hence, as $\omega$ increases, its performance is severally degraded for both linear and non-linear EH models. Furthermore, it is notable that the linear EH model has an additional negative impact on the OP performance of $D_1$, for any value of $\omega$. The reason is that the linear EH model can harvest much more power than the non-linear model, leading to higher interference. Finally, we also can see that the linear and non-linear EH model reveal different asymptotic operational behaviors, {\it i.e.}, when $P_t \rightarrow \infty$, the linear EH model saturates in an OP floor where this behavior already has been reported in the literature \cite{8026173,tran2021swipt}. In contrast, one can see that the asymptotic behavior of the non-linear EH model reveals a continuous increase in the OP of $D_1$ till to reach a maximum (when $P_H$ saturates), and then it is expected to start decreasing again according to the behavior illustrated in Fig. \ref{fig:ERvsPT}. It is paramount to understand that the main reason to this behavior is that the linear model provides false considerations about the EH operation, {\it i.e.}, when $P_t \rightarrow \infty$, $P_H \rightarrow \infty$, which is not physically possible, in contrast, for the non-linear EH model, when $P_t \rightarrow \infty$, $P_H\rightarrow P_{th}$; hence, for higher values of power, it is expected that $P_{\rm out}^{D_1}$ become 0.

\begin{figure}[h!]
\normalsize
\centering
\includegraphics[trim={0mm 0mm 0mm 0mm}, clip,width=.75\linewidth]{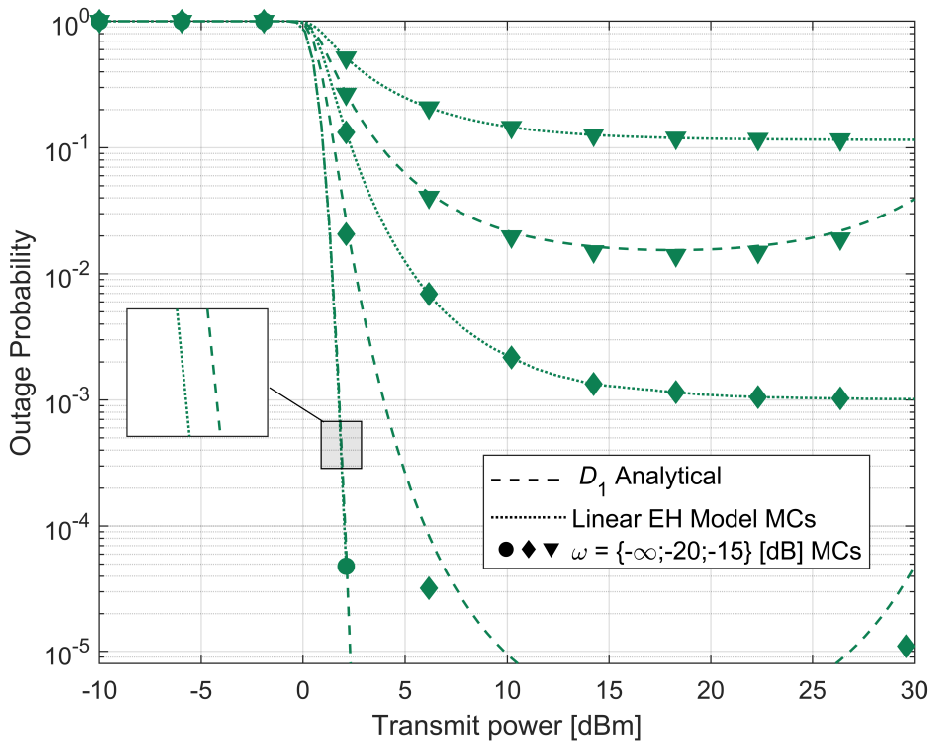}
\caption{OP {\it vs} $P_t$ with $N=100$ and $\rho=0.5$, for non-linear EH model and linear EH model.}
\label{fig:OPvsPT_omega}
\end{figure}
\vspace{-0.5cm}
\subsection{Outage Probability {\it vs} Energy Harvest Coefficient}

To gain further insights, in Fig. \ref{fig:OPvsPT_NandRHO} we plot the OP of both devices against the EH coefficient in the range $\rho\in[0.0;\, 1.0]$, considering two different transmission power $P_t=15$ and $P_t=30$ [dBm], with $\omega=\{-5,-10,-15\}$ dB and $N=100$ RIS elements. Firstly, one can notice the significant reliability improvement on $D_2$ by changing the $\rho$ value, from low to intermediate values, considering both values of $P_t$. Besides, one can see how it is important to reduce the residual self-interference, in order to $D_1$ to be able to cooperate. We notice that when the system is not well designed to remove the self-interference, the $D_2$ performance can become worst, $e.g.$, when SI factor is very high, {\it i.e.,} $\omega=-5$ dB, $D_2$ performance becomes worst if $\rho>0.65$, $\rho>0.45$ for $P_t=15$ and $P_t=30$ dBm respectively. It is justified due to the interference increasing over $D_1$, which impacts the decoding of $D_2$' message, and thus, $D_1$ can not operate as a relay. Also, notice that when $\omega$ assumes lower values, $D_1$ can drain and cooperate with high values of $\rho$, while its own OP performance is just marginally impacted.

\begin{figure}[htbp!]
\normalsize
\centering
\includegraphics[trim={0mm 0mm 0mm 0mm}, clip,width=.75\linewidth]{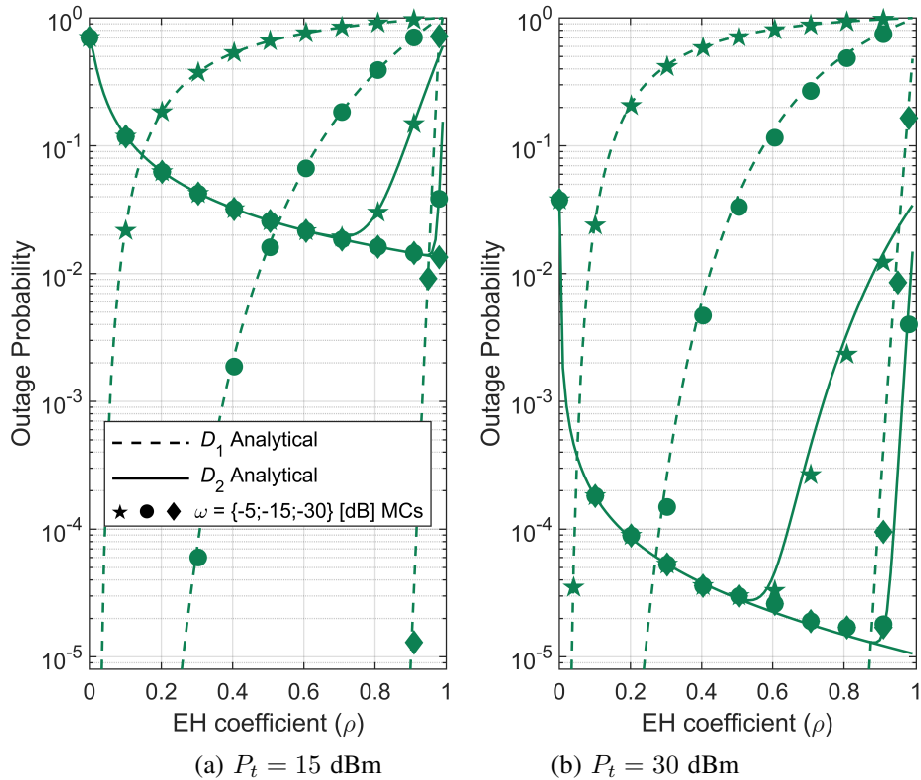}\\
\small (a) $P_t=15$ dBm \hspace{20mm} (b) $P_t=30$ dBm
\caption{OP {\it vs} $\rho$ for $N=100$ under (a) $P_t=15$ [dBm];  (b) $P_t=30$ [dBm]. Impact of SI factor $\omega=\{-5,-15,-30\}$ [dB] on the OP of $D_1$ and $D_2$.}
\label{fig:OPvsPT_NandRHO}
\end{figure}

\vspace{-0.5cm}
\subsection{Outage Probability {\it vs} Threshold Power} 

Fig. \ref{fig:OPvsPth} shows the impact of $P_{th}$ on the OP performance of both devices. When the $D_1$ has a higher capacity to harvest power (higher $P_{th}$ values),  the performance of $D_2$ is further improved. On the other hand, the self-interference also is increased and consequently worst its own performance. Also, the negative impact of the SI on the performance is notable, confirming
how it is paramount to reduce $\omega$ in order to does not jeopardize the $D_1$ performance, mainly when $P_{th}$ assume high values.

\begin{figure}[h]
\normalsize
\centering
\includegraphics[trim={0mm 0mm 0mm 0mm}, clip,width=.75\linewidth]{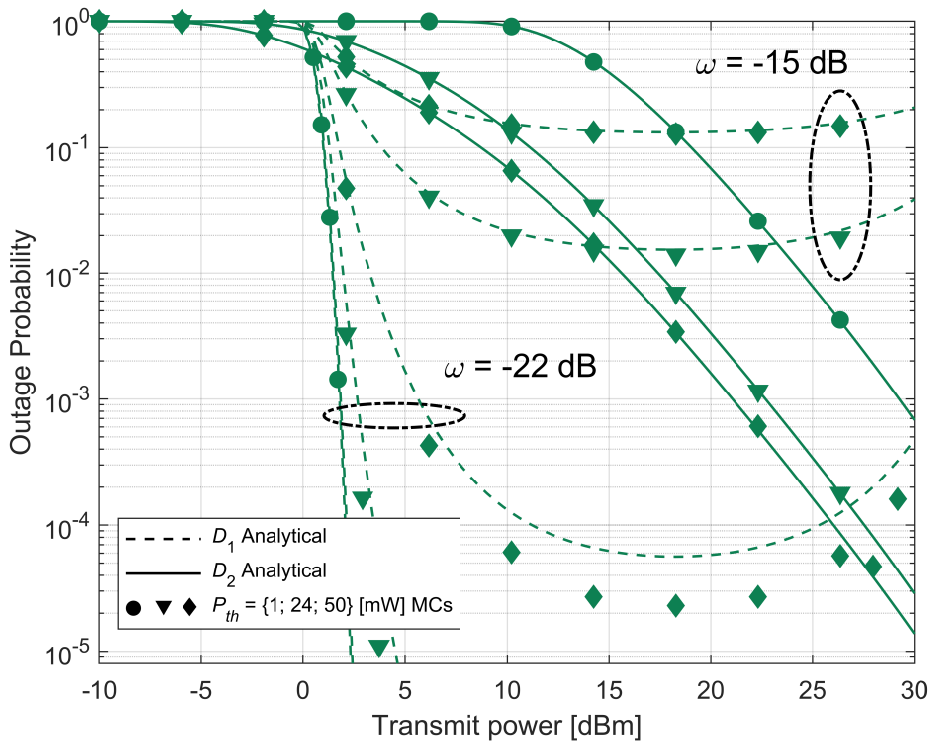}
\caption{OP {\it vs} $P_t$ for $N=100$, $\rho=0.5$ and $\omega=\{-22;-15\}$ [dB]. Herein,  $P_{th}=\{1;25;50\}$ mW.}
\label{fig:OPvsPth}
\end{figure}

\section{Conclusions}\label{sec:concl}
This work has investigated the downlink communication reliability and throughput of both the cell-center and the cell-edge device in RIS-aided cooperative SWIPT-NOMA system operating under Nakagami-$m$ fading; the cell-center device leverages of SWIPT technique equipped with a non-linear EH model to act as a FD-DF relay for assisting the cell-edge device. 
It was shown that under certain conditions, the harvested power in the cell-center device can be approximated as a Gamma random variable, allowing flexible modeling of the system. We have derived tractable closed-form expressions for the performance of both types of users, including: \textbf{\textit{a}}) upper bound for ER of a cell-center/edge devices; \textbf{\textit{b}}) tight expression for OP of both devices; and \textbf{\textit{c}}) maximum rate that the cell-edge user can attain in the cooperative mode. In addition, we assessed the impact of the RIS dimension on both metrics adopted. Also, it is demonstrated that the RIS can effectively improve the data rate of the cell-edge device when the cell-center device is cooperating. Besides, we show how paramount is to mitigate the residual self-inference till when $N$ assumes large values. Simulation results validate the correctness and the effectiveness of the developed theoretical analysis while demonstrate the advantages of the analyzed RIS-aided cooperative SWIPT-NOMA system over the non-cooperative RIS-aided system.


\null
\appendices 
\null 
\section{Proof of Lemma \ref{lemma:PH}} \label{app:proofPH}

Let us define $Y=\frac{1}{1+e^{-a(\rho P_{in}-b)}}$, thus we can write the first moment of $Y$ as

\begin{equation} \label{eq:appB1}
\mathbb{E}[Y] = \mathbb{E}\left[\frac{1}{1+e^{-a(\rho P_{in}-b)}}\right] = \hspace{-0.1cm} \int_{0}^{\infty} \hspace{-0.2cm} \frac{f_{|h_1|^2}(x)}{1+e^{-a(\rho P_t \beta_{ss} \beta_{s1} x -b)}} dx.
\end{equation}

To the best of the authors' knowledge, the integral in Eq. \eqref{eq:appB1} does not admit closed-form expression, here we will utilize the following approximation $e^{-a P_t \rho \beta_{ss}\beta_{s1}x}\approx 1-aP_t\rho\beta_{ss}\beta_{s1}x$. This approximation is reasonable for lower values of transmit power ($P_t$) and RIS elements ($N$) since  $a\rho\beta_{ss}\beta_{s1} \ll 1$, once the product of path losses naturally assume low values, thus substituting \eqref{eq:normpdf} in \eqref{eq:appB1}, we obtain

\begin{IEEEeqnarray}{rCl} \label{eq:appB2}
\mathbb{E}[Y] &\approx& \frac{1}{\Gamma(k_1) \theta_1^{k_1}(1+e^{ab}) } \hspace{-0.05cm} \int_{0}^{\infty} \hspace{-0.35cm} x^{k_1-1} e^{-x\left(\frac{1}{\theta_1}-\frac{e^{ab}aP_t\rho\beta_{ss}\beta_{s1}}{1+e^{ab}}\right)}    dx \nonumber
\\
&\approx& 
\frac{(1+e^{ab})^{k_1-1}}{\left(1+e^{ab}\left(1-aP_t\rho\beta_{ss}\beta_{s1}\theta_1\right)\right)^{k_1}}, 
\end{IEEEeqnarray}
where \cite[3.381.4]{gradshteyn2014table} is utilized. Realizing same step above, similarly, the second moment of $Y$ can be given by 
\begin{equation} \label{eq:appB3}
    \mathbb{E}[Y^2] \approx \frac{(1+e^{ab})^{k_1-2}}{\left(1+e^{ab}\left(1-2aP_t\rho\beta_{ss}\beta_{s1}\theta_1\right)\right)^{k_1}}. 
\end{equation}

Therefore utilizing the M.M. technique, the shape and scale parameter of $P_H$ in Eq. \eqref{eq:EHmodel} can be evaluated as follows
\begin{equation} \label{eq:appB4}
    k_{P_H} = \frac{\left( \mathbb{E}[Y]-\frac{1}{1+e^{ab}}\right)^2}{\mathbb{E}[Y^2]-\mathbb{E}[Y]^2},
\end{equation}

\begin{equation} \label{eq:appB5}
    \theta_{P_H} =  \frac{\mathbb{E}[Y^2]-\mathbb{E}[Y]^2}{\mathbb{E}[Y]-\frac{1}{1+e^{ab}}}.
\end{equation}

Plugging \eqref{eq:appB2} and
\eqref{eq:appB3} in \eqref{eq:appB4} and \eqref{eq:appB5}; \eqref{eq:kph} \eqref{eq:thetaph} are obtained and this completes the proof.

\section{Proof of Theorem \ref{theo:OPD1}} \label{App:proofOPD1}

After some manipulations, Eq. \eqref{eq:PrD1} can be written as

\begin{IEEEeqnarray}{rCl} \label{eq:appA1}
    P_{\rm out}^{D_1} \hspace{-0.1cm}=\hspace{-0.1cm} {\rm Pr}\hspace{-0.1cm}\left(\hspace{-0.1cm}X_1^2 \hspace{-0.1cm}<\hspace{-0.1cm}\max \left\{\hspace{-0.1cm} \frac{\xi_1(|h_{SI}|^2P_H\hspace{-0.1cm}+\hspace{-0.1cm}\sigma^2)}{\beta_{ss}\beta_{s1}},\hspace{-0.1cm} \frac{\xi_2(|h_{SI}|^2 P_H \hspace{-0.1cm}+\hspace{-0.1cm} \sigma^2)}{\beta_{ss}\beta_{s1}}\hspace{-0.1cm}\right\} \hspace{-0.1cm}\right)\hspace{-0.1cm}, \nonumber\\
\end{IEEEeqnarray}
where $\xi_2 = \frac{\gamma_{th2}}{P_t(1-\rho) (\alpha_2 - \alpha_1\gamma_{th2})}$ and $\xi_1=\frac{\gamma_{th}}{P_t(1-\rho)\alpha_1}$. Clearly, when $\rho=0$ or $\omega=0$, we have that

\begin{numcases}{P_{\rm out}^{D_1}=}
        \frac{\gamma\left(k_1, \frac{\xi_1 \sigma^2}{\theta_1\beta_{ss}\beta_{s1}} \right)}{\Gamma(k_1)}, & \qquad {\rm if } $\xi_1>\xi_2$ \nonumber
        \\
        \frac{\gamma\left(k_1,\frac{\xi_2 \sigma^2}{\theta_1\beta_{ss}\beta_{s1}}\right)}{\Gamma(k_1)}, & \qquad {\rm otherwise.}
\end{numcases}

When we have $\rho\neq0$ and $\omega\neq0$, \eqref{eq:appA1} should be further analyzed. Firstly, we should notice that according to Appendix \ref{app:proofPH},     $P_H$ is a Gamma r.v. with shape and scale parameters given by $k_{PH}$ and $\theta_{PH}$ respectively. Thus \eqref{eq:appA1} can be written as
\begin{IEEEeqnarray}{rCl} \label{eq:appA2}
    P_{\rm out}^{D_1} &=& 1 - {\rm Pr}\left( \underbrace{\xi_\ell|h_{ SI}|^2P_H}_{V} -\underbrace{X_1^2\beta_{ss}\beta_{s1}}_{Z}  <-  \xi_\ell\sigma^2  \right) \hspace{-0.15cm},
\end{IEEEeqnarray}
let us define $V=\xi_\ell|h_{ SI}|^2P_H$ and $Z=X_1^2\beta_{ss}\beta_{s1}$. Here, we proposed to approximate $V$ as a exponential r.v., $i.e.$, $V \overset{{\rm approx}}{\sim} {\rm Exponential}(\xi_\ell \omega k_{P_H}\theta_{P_H})$. Since $Z$ is the product of $X_1$ with a constant, we have $Z \sim {\rm Gamma}(k_1,\theta_1\beta_{ss}\beta_{s1})$, hence, \eqref{eq:appA2} can be written as 
\begin{IEEEeqnarray}{rCl}
    P_{\rm out}^{D_1} &=& 1 - \int_{\xi_\ell \sigma^2}^{\infty} F_{V}(z-\xi_\ell\sigma^2) f_Z(z) dz \nonumber
    \\
    \label{eq:AppBaux3}
    &=& 1 \hspace{-0.1cm} - \hspace{-0.1cm} \int_{\xi_\ell \sigma^2}^{\infty} f_Z(z) dz 
    +
    e^{\frac{\sigma^2}{\omega k_{P_H}\theta_{P_H}}} \hspace{-0.1cm} \int_{\xi_\ell \sigma^2}^{\infty} \hspace{-0.2cm} e^{-\frac{x}{\xi_1\omega k_{P_H}\theta_{P_H}}} f_Z(z) \nonumber
    \\
    &=& \frac{\gamma\left(k_1,\frac{\xi_{\ell} \sigma^2}{\theta_1\beta_{ss}\beta_{s1}}\right)}{\Gamma(k_1)} 
    +
    \frac{e^{\frac{\sigma^2}{\omega k_{P_H}\theta_{P_H}}}\xi_1 \omega k_{P_H} \theta_{P_H}}{\Gamma(k_1) } \nonumber
    \\
    &\times& \frac{\Gamma\left(k_1,\xi_1\sigma^2 \left(\frac{1}{\theta_1\beta_{ss} \beta_{s1}} + \frac{1}{\xi_1 \omega k_{P_H} \theta_{P_H}}\right) \right)}{\left(\xi_1 \omega k_{P_H} \theta_{P_H}+\theta_1\beta_{ss} \beta_{s1} \right)^{k_1}},
\end{IEEEeqnarray}
where we utilized the exponential CDF and \cite[3.381.9]{gradshteyn2014table} to solve \eqref{eq:AppBaux3}.

\section{Proof of Theorem \ref{theo:PoutD2}} \label{App:proofOPD2}

To derive the OP of $D_2$ it is reasonable to separate it in two cases, $i.e$,  when the $D_1$ does not operate as relay $(\rho = 0)$, and when the $D_1$ operates as a relay $(\rho \neq 0)$.

\subsection*{I) $\rho = 0$ ($D_1$ does not act as a relay)}

After some manipulations, we can written \eqref{eq:PrD2} as
\begin{IEEEeqnarray}{rCl}\label{eq:AppCaux1}
P_{\rm out}^{D_2} &=&  {\rm Pr}\left(|h_2|^2 < \frac{\sigma^2 \gamma_{th2} }{P_t(\alpha_2-\alpha_1\gamma_{th2})}    \right),
\end{IEEEeqnarray}
utilizing the CDF of exponential r.v., we obtain the following
\begin{equation}\label{eq:AppCaux2}
    P_{\rm out}^{D_2} = 1 - e^{- \frac{1}{\beta_{ss}\beta_{s2}N} \left(\frac{\sigma^2 \gamma_{th2} }{P_t(\alpha_2-\alpha_1\gamma_{th2})}\right)}.
\end{equation}

\subsection*{II) \underline{$\rho \neq 0$}
($D_1$ operates as a relay)}

After some manipulations, we can written \eqref{eq:PrD2} as

\begin{IEEEeqnarray}{rCl} \label{eq:AppCaux3}
    P_{\rm out}^{D_2}&=& {\rm Pr} \hspace{-0.1cm} \left(\hspace{-0.1cm}|h_1|^2\hspace{-0.2cm} <  \xi_2(\omega P_H+\sigma^2),|h_2|^2 < \frac{\sigma^2 \gamma_{th2} }{P_t(\alpha_2-\alpha_1\gamma_{th2})} \right) \nonumber
    \\ 
    &+& {\rm Pr}\left(|h_1|^2 \geq  \xi_2(\omega P_H+\sigma^2), \right. \nonumber 
    \\
     && \left. \frac{P_t |h_2|^2}{\sigma^2} < \frac{\gamma_{th2} - \frac{P_H}{\sigma^2} |h_{1,2}|^2}{\alpha_2-\alpha_1\gamma_{th2}+\alpha_1\frac{P_H}{\sigma^2} |h_{1,2}|^2}    \right).
\end{IEEEeqnarray}

Let us define $\overline{W} = \frac{P_H}{\sigma^2} |h_{1,2}|^2$, since $P_H$ assume lower values than $|h_{1,2}|^2$, we approximate $\overline{W}$ as exponential r.v.,  $\overline{W}\sim {\rm exp}\left(\frac{\theta_{P_H}k_{P_H}\beta_{12}}{\sigma^2}\right)$, hence

\begin{IEEEeqnarray}{rCl}  \label{eq:AppBaux4}
    && {\Pr} \left(\frac{P_t |h_2|^2}{\sigma^2} < \frac{\gamma_{th2} - \overline{W}}{\alpha_2-\alpha_1\gamma_{th2}+\alpha_1\overline{W}}    \right) \nonumber
    \\
    &=& \int_{0}^{\gamma_{th2}} \hspace{-0.3cm}\int_{0}^{\frac{\gamma_{th2} - y}{\alpha_2-\alpha_1\gamma_{th2}+\alpha_1y} } f_{\frac{P_t |h_2|^2}{\sigma^2}}(x) f_{\overline{W}}(y) dx dy \nonumber
    \\
    &=& \hspace{-0.1cm}  \underbrace{F_{\overline{W}}(\gamma_{th2}) \hspace{-0.1cm}-\hspace{-0.2cm}
    \int_{0}^{\gamma_{th2}}  \hspace{-0.15cm} \frac{\sigma^2e^{\left(\frac{-\sigma^2}{P_t\beta_{ss}\beta_{s2}N}\right)\hspace{-0.1cm} \left(\frac{\gamma_{th2} - y}{\alpha_2\hspace{-0.1cm}-\alpha_1\hspace{-0.05cm}\gamma_{th2}\hspace{-0.05cm}+\hspace{-0.05cm}\alpha_1y} \right)\hspace{-0.05cm}-\hspace{-0.05cm}\frac{\sigma^2}{\beta_{12}\theta_{P_H}k_{P_H}} y}}{\beta_{12}\theta_{P_H}k_{P_H}} }_{I}  \nonumber
    \\
\end{IEEEeqnarray}

Since $D_2$ assume low rate values, and $\alpha_2 > \alpha_1$ due to implementation of NOMA, we have $\alpha_2 \gg \alpha_1 \gamma_{th2}$, thus, the integral in $I$ can be approximated as

\begin{IEEEeqnarray}{rCl} \label{eq:AppBaux5}
       && \frac{e^{\frac{-\gamma_{th2}\sigma^2}{P_t\beta_{ss}\beta_{s2}N(\alpha_2-\alpha_1\gamma_{th2})}}\sigma^2}{\beta_{12}\theta_{P_H}k_{P_H}} \hspace{-0.15cm} \nonumber \\
       &\times&\int_{0}^{\gamma_{th2}}   e^{ y \left( \frac{\sigma^2 }{P_{t}N\beta_{ss}\beta_{s2}(\alpha_2
       -\alpha_1\gamma_{th2})} -\frac{\sigma^2}{\beta_{12}\theta_{P_H}k_{P_H}}\right)} dy, 
\end{IEEEeqnarray}
whose solution can be found trivially. Substituting the solution of \eqref{eq:AppBaux5} in \eqref{eq:AppBaux4} we obtain

\begin{IEEEeqnarray}{rCl} \label{eq:AppBaux6}
   &&I \approx 1 
   - 
   e^{\frac{\gamma_{th2}\sigma^2}{\theta_{P_H}k_{P_H}\beta_{12}}} 
   -
   \frac{e^{\frac{-\gamma_{th2}\sigma^2}{P_t\beta_{ss}\beta_{s2}N(\alpha_2-\alpha_1\gamma_{th2})}}\sigma^2}{\beta_{12}\theta_{P_H}k_{P_H}} \nonumber 
   \\
   &&\times \frac{\left( e^{\frac{\gamma_{th2}\sigma^2}{P_t\beta_{ss}\beta_{s2}N(\alpha_2-\alpha_1\gamma_{th2})} - \frac{\gamma_{th2 \sigma^2}}{\beta_{12\theta_{P_H}k_{P_H}}}} - 1 \right)}{\frac{\sigma^2}{P_t\beta_{ss}\beta_{s2}N(\alpha_2-\alpha_1\gamma_{th2})}-\frac{\sigma^2}{\beta_{12}\theta_{P_H}k_{P_H}}},
\end{IEEEeqnarray}
utilizing the analytical results derived in Appendix \ref{App:proofOPD1} and utilizing \eqref{eq:AppBaux6}, \eqref{eq:PoutD2} is obtained and this complete the proof. 



\end{document}